\documentclass[sn-mathphys,Numbered]{sn-jnl}
\usepackage{graphicx}
\usepackage{multirow}
\usepackage{amsmath,amssymb,amsfonts}
\usepackage{amsthm}
\usepackage{mathrsfs}
\usepackage[title]{appendix}
\usepackage{xcolor}
\usepackage{textcomp}
\usepackage{manyfoot}
\usepackage{booktabs}
\usepackage{algorithm}
\usepackage{algorithmicx}
\usepackage{algpseudocode}
\usepackage{listings}

\theoremstyle{thmstyleone}
\newtheorem{theorem}{Theorem}
\newtheorem{proposition}[theorem]{Proposition}
\newtheorem{lemma}[theorem]{Lemma}

\theoremstyle{thmstyletwo}

\theoremstyle{thmstylethree}
\newtheorem{definition}{Definition}
\raggedbottom

\begin{document}

\title{Graph structure of quantum mechanics}

\author[1]{\fnm{Songyi} \sur{Liu}}\email{liusongyi@buaa.edu.cn}

\author*[1]{\fnm{Yongjun} \sur{Wang}}\email{wangyj@buaa.edu.cn}

\author[1]{\fnm{Baoshan} \sur{Wang}}\email{bwang@buaa.edu.cn}

\author[1]{\fnm{Jian} \sur{Yan}}\email{jianyanmath@buaa.edu.cn}

\author[1]{\fnm{Heng} \sur{Zhou}}\email{zhouheng@buaa.edu.cn}

\affil*[1]{\orgdiv{School of Mathematical Science}, \orgname{Beihang University}, \orgaddress{ \city{Beijing}, \postcode{100191}, \country{China}}}

\abstract{The quantum mechanics is proved to admit no hidden-variable in 1960s, which means the quantum systems are contextual. Revealing the mathematical structure of quantum mechanics is a significant task. We develop the approach of partial Boolean algebra to characterize the contextuality theory with local consistency and exclusivity, and then prove that the finite dimensional quantum systems are determined by atoms using two graph structure theorems. We also generalize our work to infinite dimensional cases. Our conclusions indicate that the quantum mechanics is a graph-structured combination of multiple hidden-variable theories, and provide a precise mathematical framework for quantum contextuality.}

\keywords{Quantum contextuality, Hidden-variable theory, Partial Boolean algebra, Atom graph}

\maketitle

\section*{Declarations}

\bmhead{Competing interests}
The authors have no relevant financial or non-financial interests to disclose.

\section{Introduction}

In the early 20th century, the establishment of quantum mechanics spurred research into the mathematical foundation of nature. Some researchers like Einstein believe that the quantum mechanics should be characterized by hidden-variable theory\citep{Fine1990Einstein}. In other words, all the observables can be assigned values simultaneously by an unknown hidden-variable $\lambda$. After a period of controversy, hidden-variable theory was proven inapplicable to quantum mechanics in 1960s.\par

The nonexistence of hidden-variables was proved by two significant achievements: Bell nonlocality\citep{Bell1964On} and Kochen-Specker theorem\citep{Kochen1967The}. The former is verified experimentally by Clauser-Horne-Shimony-Holt (CHSH) experiment\citep{Clauser1969Proposed}, and the later is a mathematical conclusion. They paved two different paths for the exploration of quantum mechanics. Bell nonlocality has inspired the research in applications such as quantum computation, quantum communication channel and quantum cryptography\citep{Brunner2014Bell}, while Kochen-Specker theorem pioneers a field later known as the contextuality theory\citep{Adan2023Kochen}, which remains dedicated to revealing the mathematical foundation of quantum mechanics.\par

The hidden-variable theory supposes that the nature is determined by a hidden-variable space, and the random phenomenon is entirely attributed to the randomness of hidden-variables. Therefore, all the observables are essentially commeasurable. However, Bell nonlocality and Kochen-Specker theorem point out that the randomness and incompatibility are essential in quantum mechanics. A set of commeasurable observables is called a context. The hidden-variable theory exactly describes a single context, while quantum mechanics consists of multiple contexts. This accounts for the failure of hidden-variable theory.\par

A system with multiple contexts is called a contextual system, which must be described by contextuality theory. A key to reveal the mathematical foundation of quantum mechanics is to characterize the structure of contexts. Numerous methods have been introduced such as the standard quantum logic\citep{Birkhoff1936The}, property lattices\citep{Coecke2002Quantum}, Topos theory\citep{Isham1998Topos}, marginal problem\citep{Fritz2013Entropic}, sheaf theory\citep{Abramsky2011sheaf}, compatibility hypergraph\citep{Acin2015A}, exclusivity graph\citep{Adan2014Graph}, etc. Nevertheless, the characterization of structure of quantum contexts is still a tough problem\citep{Adan2023Kochen}.\par

In this article, we will prove that the contexts in quantum mechanics form a concise and elegant mathematical structure, graph. According to the conclusion, a context is just a complete graph, and the structure of quantum contexts depends only on how these contexts intersect at the vertices. Such an opinion is also hinted at within the exclusivity graph, but the precise proof cannot be completed using the graph approach alone.\par

In reality, a context corresponds to a hidden-variable theory, thus the structure of contexts is exactly the combination of multiple hidden-variable theories. In most contextuality theories, the structure of hidden-variable theory has not been given adequate consideration. However, the answer is not complicated. A hidden-variable theory is equivalent to a classical probability theory, that is, a Boolean algebra.\par

Therefore, the problem is reduced to the combination of multiple Boolean algebras. Kochen and Specker adopted this perspective, and introduced the partial Boolean algebra which proved the Kochen-Specker theorem in 1960s. It is appropriate to use partial Boolean algebra to describe quantum systems because quantum mechanics satisfies local consistency\citep{Abramsky2015Contextuality,Ramanathan2012Generalized}, also called non-signaling\citep{Popescu1994Quantum} in Bell experiments. Local consistency ensures that the probability of a given event is consistent across various contexts, allowing it to be treated as one element of the system. Therefore, a partial Boolean algebra is the combination of multiple Boolean algebras. In 2015, Kochen initiated the works of reconstructing the foundation of quantum mechanics with partial Boolean algebra\citep{Kochen2015Reconstruction}, which presents an advanced progress in the mathematical structure of quantum mechanics.\par

However, partial Boolean algebra is not ``natural" enough to reflect the graph structure of quantum mechanics, because it lacks an important property: exclusivity, also known as local orthogonality\citep{Fritz2013Local}, Specker's exclusivity principle\citep{Adan2012Specker}, etc. Exclusivity ensures that the probability sum of exclusive events is less than 1. Abramsky and Barbosa prove that exclusivity is equivalent to transitivity\citep{Abramsky2020The}, which means that the logic reasoning across different contexts cannot be carried out without exclusivity. And we will prove that the exclusivity is necessary for the graph structure of quantum mechanics.\par

In this paper, we develop the work proposed by Kochen and Specker\citep{Kochen1967The,Kochen2015Reconstruction} , and develop the conclusions from Abramsky and Barbosa\citep{Abramsky2020The}. We introduce in section \ref{sec_epBA} the basic conceptions of exclusive partial Boolean algebra ($epBA$), and prove two graph structure theorems for finite $epBA$ in section \ref{sec_thm}. In section \ref{sec_quantum}, we define "quantum system" ($QS$) to show that quantum mechanics can be depicted by $epBA$, and prove the graph structure theorems for finite $QS$. The generalization of our conclusions to infinite cases is discussed in the section \ref{sec_infinite}, and the relevant comparison between our work and the known exclusivity graph approach is provided in section \ref{sec_exgraph}.

We will show that most systems of interest are determined by atom graphs in contextuality theory, just as the systems are determined by hidden-variable space in hidden-variable theory. Table \ref{h-c-q} shows the analogue notions between hidden-variable theory, contextuality theory and quantum mechanics.

\begin{table}[h]
\renewcommand{\arraystretch}{1.5}
\begin{tabular}{|c|c|c|}
\hline
Hidden-variable theory & Contextuality theory & Quantum mechanics\\
\hline
Boolean algebra $B\in BA$  & Exclusive partial Boolean algebra $B\in epBA$  & Quantum system $Q\in QS$  \\
Hidden-variable space $\Lambda$ & Atom graph $AG(B)$ & Atom graph $AG(Q)$ \\
Hidden-variable $\lambda$ & Atom $a$ & Atom projector $\hat{P}$ \\
Probability measure $p$ & State $p$ & Quantum state $\rho$ \\
\hline
 \end{tabular}
  \caption{Analogue notions between hidden-variable theory, contextuality theory and quantum mechanics}\label{h-c-q}
\end{table}

\section{Exclusive partial Boolean algebra}\label{sec_epBA}

Hidden-variable theory is based on Boolean algebra. The well-known axiomatized definition of classical probability theory is based on $\sigma$ algebra, which is also a type of Boolean algebra. Boolean algebra characterizes the logic of classical world, so it can be utilized to describe the single context in contextuality theory. $BA$ is used to denote the collection of all Boolean algebras\par

Stone's representation theorem indicates that any Boolean algebra is a set algebra. Therefore, any finite Boolean algebra $B\in BA$ is the power-set algebra of a finite set $\Lambda$, that is, $B\cong\mathcal{P}(\Lambda)$. $\Lambda$ is exactly the hidden-variable space (also known as sample space or phase space), and $B$ is the event algebra.\par

In quantum mechanics, multiple contexts correspond to multiple Boolean algebras. Kochen and Specker introduce the partial Boolean algebra to explain how these Boolean algebras form the global system. The definition below is from \cite{Van2012Noncommutativity}.

\begin{definition}[partial Boolean algebra]
    If $B$ is a set with
    \begin{itemize}
    \item a reflexive and symmetric binary relation $\odot\subseteq B\times B$,
    \item a (total) unary operation $\lnot:\ B\rightarrow B$,
    \item two (partial) binary operations $\land,\ \lor:\ \odot\rightarrow B$,
    \item elements $0,1 \in B$,
    \end{itemize}
    satisfying that for every subset $S\subseteq B$ such that $\forall a,\ b\in S,\ a\odot b$, there exists a Boolean subalgebra $C\subseteq B$ determined by $(C,\land,\lor,\lnot,0,1)$ and $S\subseteq C$, then $B$ is called a \textbf{partial Boolean algebra}, written by $(B,\odot)$, or $(B,\odot;\land,\lor,\lnot,0,1)$ for details.\par
    We use $pBA$ to denote the collection of all partial Boolean algebras.
\end{definition}

The abbreviation $pBA$ is adopted from \cite{Abramsky2020The}, where $pBA$ represents the category of partial Boolean algebras. The relation $\odot$ represents the compatibility. $a\odot b$ if and only if $a,b$ belong to a Boolean subalgebra. Therefore, a Boolean subalgebra of $B$ is called a context, and a maximal Boolean subalgebra is called a maximal context.\par

Partial Boolean algebra permits an element to belongs to different Boolean subalgebra, which has achieved great success, because it meets the local consistency of quantum mechanics. The probability of an event is consistent across different contexts. For example, in the CHSH experiment\citep{Clauser1969Proposed}, Alice measures incompatible observables $a$ and $a'$ of a particle. Holding another particle, Bob measures $b$ and $b'$. The event $a=1$ happens with the same probability regardless of the measurement done by Bob. In other words, $p(a=1)=p(a,b=1,1)+p(a,b=1,-1)=p(a,b'=1,1)+p(a,b'=1,-1)$. The property is called non-signaling in Bell scenario. In general scenario, it is called local consistency.\par

However, a partial Boolean algebra may be really not natural. To explain it, we give the definition of partial order\citep{Abramsky2020The}.\par

\begin{definition}
If $B\in pBA$, $a,b\in B$, then $a\leq b$ is defined by $a\odot b$ and $a\land b=a$.
\end{definition}

Therefore, the relation $\leq$ is defined by the partial order on the Boolean subalgebras, which depicts the logical reasoning in a context. The logical reasoning across multiple contexts is not permitted unless they are contained in a larger context.\par

Next we give an example of "unnatural" partial Boolean algebra. Consider $B_1$ and $B_2\in pBA$. They are similar to each other and illustrated in the Fig.\ref{pB1} and Fig.\ref{pB2}, where the lines represent the partial orders incompletely.

\begin{figure}[H]
      \begin{minipage}[t]{0.5\linewidth}
          \centering
          \includegraphics[width=0.9\linewidth]{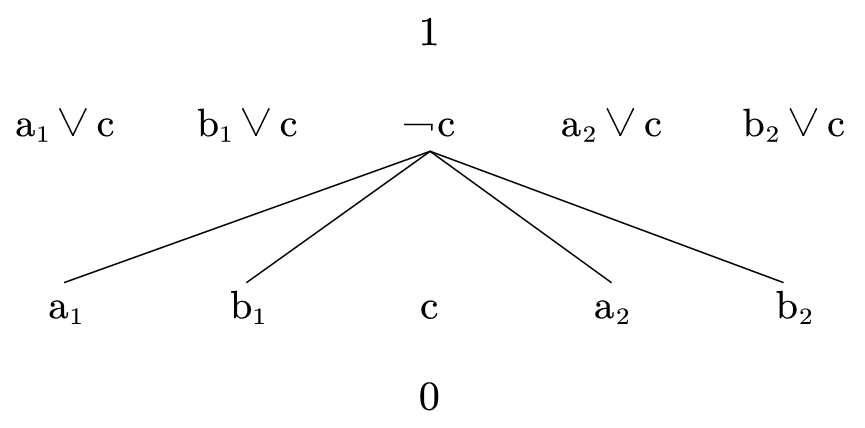}
          \caption{$B_1$}\label{pB1}
      \end{minipage}
      \begin{minipage}[t]{0.5\linewidth}
          \includegraphics[width=0.9\linewidth]{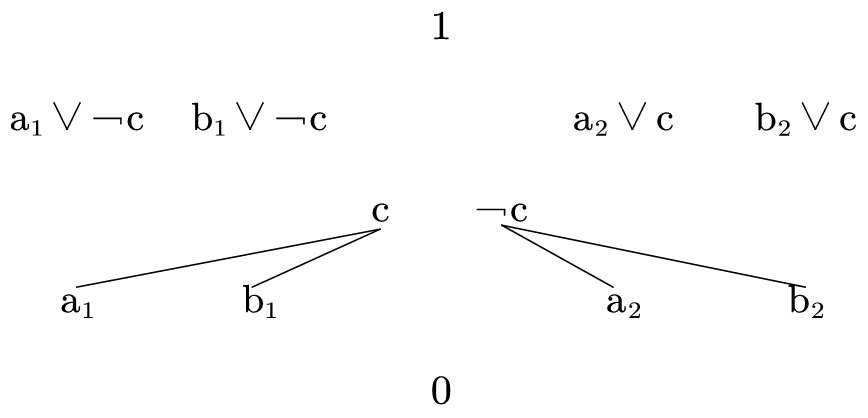}
          \caption{$B_2$}\label{pB2}
      \end{minipage}
\end{figure}

$B_1$ and $B_2$ have the same elements. $B_2$ contains two maximal Boolean subalgebras $C_1$ and $C_2$, which are isomorphic to $\mathcal{P}(\{a_1,b_1,\neg c\})$ and $\mathcal{P}(\{a_2,b_2,c\}$. It induces that the atom $\neg c$ of $C_1$ is not atom in another context $C_2$, and the atom $c$ of $C_2$ is not atom in $C_1$. Furthermore, we have $a_1\leq c$ and $c\leq a_2\lor c$, but no $a_1\leq a_2\lor c$ because $a_1$ and $a_2\lor c$ are not in a same Boolean subalgebra. In other words, $B_2$ does not satisfy transitivity, while $B_1$ does not present such issue.\par

The scenario of $B_2$ does not occur in quantum mechanics because of the exclusivity\citep{Fritz2013Local,Adan2012Specker}. For any quantum state, the probability sum of exclusive events is not more than 1. Abramsky et al. extends exclusivity from quantum states to partial Boolean algebras\citep{Abramsky2020The}. We introduces their conclusions below.

\begin{definition}
 Let $B\in pBA$. \par
 $a,b\in B$ are said to be exclusive, written $a\bot b$, if there exists an element $c\in B$ such that $a\leq c$ and $b\leq\neg c$. \par
 $B$ is said to satisfy \textbf{Logical Exclusivity Principle (LEP)} or to be \textbf{exclusive} if $\bot\subseteq\odot$.\par
 $B$ is said to be transitive if $a\leq b$ and $b\leq c$, then $a\leq c$ for any $a,b,c\in B$.
\end{definition}

\begin{theorem}[Abramsky and Barbosa\citep{Abramsky2020The}]
If $B\in pBA$, then $B$ is exclusive iff $B$ is transitive.
\end{theorem}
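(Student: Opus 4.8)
The plan is to prove the two implications separately, relying on three facts that I would record first. (i) In any Boolean algebra, hence inside any Boolean subalgebra $C\subseteq B$, one has $x\leq y$ iff $\neg y\leq\neg x$, and $C$ is closed under $\neg$. (ii) The relation $\leq$ of $B$, restricted to a pair of elements lying in a common Boolean subalgebra $C$, coincides with the intrinsic order of $C$; this is legitimate because $\land$ is a genuine partial operation $\odot\to B$, so the equality $a\land b=a$ does not depend on the context chosen. (iii) The defining clause of $pBA$: any pairwise compatible subset $S\subseteq B$ is contained in a single Boolean subalgebra.

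For ``transitive $\Rightarrow$ exclusive'', I would start from $a\bot b$ with a witness $c$ satisfying $a\leq c$ and $b\leq\neg c$. Since $a\leq c$ gives $a\odot c$, the elements $a,c,\neg a,\neg c$ all lie in one Boolean subalgebra, in which $a\leq c$ forces $\neg c\leq\neg a$; by (ii) this transports to $\neg c\leq\neg a$ in $B$. Transitivity applied to $b\leq\neg c$ and $\neg c\leq\neg a$ then yields $b\leq\neg a$, so $b\odot\neg a$, and as the witnessing subalgebra also contains $a=\neg\neg a$ we conclude $a\odot b$. Hence $\bot\subseteq\odot$.

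For ``exclusive $\Rightarrow$ transitive'', I would assume $a\leq b$ and $b\leq c$ and aim at $a\leq c$. As before, $b\leq c$ gives $\neg c\leq\neg b$, and then $b$ itself witnesses $a\bot\neg c$, since $a\leq b$ and $\neg c\leq\neg b$. Exclusivity (LEP) turns this into $a\odot\neg c$, hence $a\odot c$. Now $a\odot b$, $b\odot c$ and $a\odot c$, so by (iii) the set $\{a,b,c\}$ sits inside a single Boolean subalgebra $C$, in which $a\leq b$ and $b\leq c$ are honest order relations; transitivity of $\leq$ in the Boolean algebra $C$ gives $a\leq c$, that is $a\land c=a$, and together with $a\odot c$ this is precisely $a\leq c$ in $B$.

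I do not expect a deep obstacle here; the work is careful bookkeeping, keeping the partial ($pBA$-level) operations apart from the total Boolean operations valid inside a fixed context. The one genuinely structural input is clause (iii): it is what lets the final step of the second implication be carried out inside an actual Boolean algebra, and had the definition of $pBA$ only guaranteed a common context for each compatible pair rather than for every pairwise compatible set, that step would demand a more delicate argument. I would therefore flag the use of (iii), together with the well-definedness of $\land$ underlying (ii), as the points most deserving of care.
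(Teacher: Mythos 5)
Your argument is correct, and the paper itself does not reprove this result — it imports it from Abramsky and Barbosa, whose original proof proceeds exactly as you do (for exclusivity implies transitivity, the key observation that $b$ witnesses $a\bot\neg c$, followed by placing the pairwise-compatible set $\{a,b,c\}$ in a common Boolean subalgebra; for the converse, contraposing $a\leq c$ to $\neg c\leq\neg a$ and chaining with $b\leq\neg c$). Your flagged reliance on the clause that every pairwise-compatible subset, not merely every compatible pair, lies in a common context is indeed the essential structural input and is exactly what the paper's definition of $pBA$ provides.
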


The conclusion indicates that exclusivity is equivalent to transitivity. We adopt the abbreviation  $epBA$ from \cite{Abramsky2020The} to denote the collection of all partial Boolean algebras satisfying LEP, and call them "exclusive partial Boolean algebra" for simplicity. $epBA$ describes the "supraquantum" systems with local consistency and exclusivity, which are not only properties of quantum mechanics, but also requirements for any contextual system capable of normal logic reasoning.\par

In addition to $epBA$, \cite{Abramsky2020The} also introduces the definition of probability distribution on contextual systems.

\begin{definition}
If $B\in pBA$,  then a \textbf{state} on $B$ is defined by a map $p:B\rightarrow[0,\ 1]$ such that
 \begin{itemize}
    \item $p(0)=0$.
    \item $p(\neg x)=1-p(x)$.
    \item for all $x,y\in B$ with $x\odot y$,$\ p(x\lor y)+p(x\land y)=p(x)+p(y)$.
 \end{itemize}
A state is called a \textbf{0-1 state} if its range is $\{0,1\}$. Use $s(B)$ to denote the states set on $B$.\label{def_state}
\end{definition}

A 0-1 state is exactly a homomorphism from $B$ to $\{0,1\}$, that is, a truth-values assignment. Definition \ref{def_state} only requires finite additivity because $pBA$ only demands closure under finite unions, and we will extend it to infinite cases later in this paper. \par

Although not yet thoroughly investigated, $epBA$ possesses more elegant and useful properties. For example, if $B$ is a finite $epBA$, then its atoms are exactly the atoms of all the maximal Boolean subalgebras, and $B$ is determined by its atoms just like a finite Boolean algebra! These conclusions will be proved in the following sections.

\section{Graph Structure Theorems of Finite epBA}\label{sec_thm}

If $B$ is a finite Boolean algebra, whose atom set is $A(B)$, then $B=\mathcal{P}(A(B))$. The theorem is fundamental for hidden-variable theory, where $A(B)$ is the set of hidden-variables. To some extent, finite $epBA$ also satisfies this theorem. Firstly, we introduce the atoms of $pBA$.

\begin{definition}
     Let $B\in pBA$, $a\in B$ and $a\neq 0$. $a$ is called an \textbf{atom} of $B$ if for each $x\in B$, $x\leq a$ implies $x=0$ or $x=a$. Use $A(B)$ to denote the atoms set of $B$.
\end{definition}

For example, the atoms of $B_1$ in Fig.\ref{pB1} are $a_1,b_1,c,a_2$ and $b_2$, while the ones of $B_2$ in Fig.\ref{pB2} are $a_1,b_1,a_2$ and $b_2$. $c$ is the atom of the maximal Boolean subalgebra of $B_2$, but it is not the atom of $B_2$. This phenomenon destroys the normal logic reasoning on $B_2$. Fortunately, the lemma below shows that exclusivity can exclude the phenomenon.

\begin{lemma}\label{lem1_finite}
If $B$ is a finite $epBA$, and $C\subseteq B$ is a maximal Boolean subalgebra, then $A(C)\subseteq A(B)$.
\end{lemma}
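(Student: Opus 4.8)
The plan is to show directly that every atom $a$ of the maximal context $C$ remains unsplittable inside the whole algebra $B$. So I would fix $a\in A(C)$, take an arbitrary $x\in B$ with $x\le a$ and $x\neq 0$, and aim to prove $x=a$; this is exactly the statement that $a\in A(B)$.

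The first and main step is to show that $x$ is compatible with every element of $C$. Given $y\in C$, the meet $a\wedge y$ is computed inside the Boolean algebra $C$ and satisfies $a\wedge y\le a$, so since $a$ is an atom of $C$ we get either $a\wedge y=a$, i.e.\ $a\le y$, or $a\wedge y=0$, i.e.\ $a\le\neg y$. Combining this with $x\le a$ and transitivity --- available because $B$ is exclusive, by the theorem of Abramsky and Barbosa quoted above --- yields $x\le y$ or $x\le\neg y$. In the first case $x\odot y$ is immediate from the definition of $\le$. In the second case, taking $c=\neg y$ we have $x\le c$ and $y\le\neg c$, so $x\bot y$, and LEP ($\bot\subseteq\odot$) gives $x\odot y$ again. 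Hence $x\odot y$ for every $y\in C$.

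Then I would apply the defining closure property of a partial Boolean algebra to the set $S=C\cup\{x\}$: its elements are pairwise compatible (pairs within $C$ because $C$ is a context, the remaining pairs by the previous step, and $x\odot x$ by reflexivity), so $S$ is contained in some Boolean subalgebra $C'$ of $B$. Since $C\subseteq C'$ and $C$ is maximal, $C'=C$, whence $x\in C$. Now everything takes place inside the Boolean algebra $C$: $x\le a$ with $a$ an atom of $C$ forces $x=0$ or $x=a$, and since $x\neq 0$ we conclude $x=a$. As $a\in A(C)$ was arbitrary, $A(C)\subseteq A(B)$.

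I expect the compatibility step to be where the work really is, since it is the only place exclusivity enters: the atom property of $C$ reduces the comparison of $a$ with a given $y\in C$ to the clean dichotomy $a\le y$ or $a\le\neg y$, but turning that into compatibility of the strictly smaller element $x$ with $y$ genuinely needs both transitivity and $\bot\subseteq\odot$ (this is precisely what fails for the non-exclusive example $B_2$). Once compatibility with all of $C$ is established, maximality of $C$ finishes the argument essentially for free; note that finiteness of $B$ is not actually used in this lemma, and will only become relevant when these atoms are assembled into the global graph structure.
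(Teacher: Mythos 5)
Your proof is correct, and its engine is the same as the paper's: exclusivity (via $\bot\subseteq\odot$ and, equivalently, transitivity) upgrades a would-be proper sub-element of an atom of $C$ to something compatible with $C$, and maximality of $C$ then closes the argument. The packaging differs in two ways worth noting. The paper argues by contradiction: assuming $0\neq b\leq c$ with $b\neq c$, it shows $b\bot c'$ for every \emph{other atom} $c'$ of $C$ (since $c'\leq\neg c$), so $\{b\}\cup A(C)$ sits inside a Boolean subalgebra that properly extends $C$ --- a step that tacitly uses the fact that $A(C)$ generates $C$, i.e.\ finiteness, and that requires the degenerate case $C=\{0,1\}$ to be treated separately. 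You instead prove $x\odot y$ for \emph{every} $y\in C$ via the dichotomy $a\leq y$ or $a\leq\neg y$, adjoin $x$ to all of $C$, and conclude $x\in C$ from maximality, after which the atom property of $C$ forces $x=a$. This buys you a uniform treatment of the trivial context and, as you observe, eliminates any appeal to finiteness or to $A(C)$ generating $C$; your argument therefore transfers verbatim to the atomic-and-complete setting of the infinite-dimensional version of this lemma, where the paper's phrase ``identical to the proof'' actually needs the extra (and there nontrivial) observation that a Boolean subalgebra containing $\{b\}\cup A(C)$ contains $C$. Both routes are sound; yours is marginally more self-contained and more general.
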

\begin{proof}
If $C=\{0,1\}$, then $B=\{0,1\}$ because $C$ is maximal. $1$ is the only atom of both $B$ and $C$. We suppose $C$ has at least two atoms below.\par
Let $c\in A(C)$. If $c\notin A(B)$, there exists $b\in B$ such that $b\leq c$, $b\neq c$ and $b\neq 0$, so $b\odot c$ and $b\notin C$. For any other atom $c'\neq c$ of $C$, we have $c'\leq\lnot c$, so $b\bot c'$ and then $b\odot c'$ because $B$ is exclusive. Therefore, $\{b\}\cup A(C)$ is contained in a Boolean subalgebra of $B$, which contradicts that $C$ is maximal. In conclusion, $c\in A(B)$.
\end{proof}

Conversely, it is obvious that the atoms of $B$ are atoms of its maximal Boolean subalgebras. Therefore, if $B\in epBA$, the maximal contexts of $B$ have hidden-variables of equal status. The lemma implies that it is possible to reconstruct $B$ solely from its atoms. We define "atom graph" of $pBA$ to demonstrate the conjecture.

\begin{definition}
If $B\in pBA$, the \textbf{atom graph} of $B$, written $AG(B)$, is defined by a graph with vertice set $A(B)$ such that $a_1,a_2\in A(B)$ are adjacent iff $a_1\odot a_2$ and $a_1\neq a_2$ .
\end{definition}

An atom graph is a simple graph. If $B$ is a finite Boolean algebra, then $AG(B)$ is a complete graph. The atom graphs of $B_1$ in Fig.\ref{pB1} and $B_2$ in Fig.\ref{pB2} are show in Fig.\ref{AG(pB1)} and \ref{AG(pB2)} as below.

\begin{figure}[H]
      \begin{minipage}[t]{0.5\linewidth}
          \centering
          \includegraphics[width=0.7\linewidth]{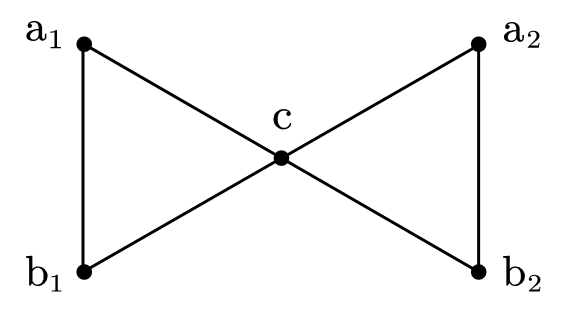}
          \caption{$AG(B_1)$}\label{AG(pB1)}
      \end{minipage}
      \begin{minipage}[t]{0.5\linewidth}
          \includegraphics[width=0.7\linewidth]{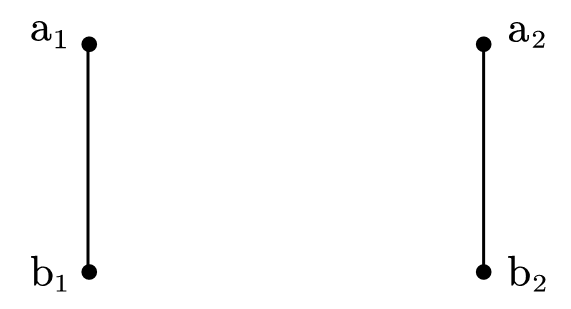}
          \caption{$AG(B_2)$}\label{AG(pB2)}
      \end{minipage}
\end{figure}

$AG(B_1)$ includes all the atoms of maximal contexts of $B_1$, while $AG(B_2)$ does not. It means that $AG(B_2)$ cannot reconstruct $B_2$. In fact, $AG(B_2)$ is also the atom graph of $B_2'=\{0,a_1,b_1,a_2,b_2,1\}$ with maximal Boolean subalgebras $\{0,a_1,b_1,1\}$ and $\{0,a_2,b_2,1\}$, and obviously $B_2'\in epBA$. It is natural to consider when a contextual system can be reconstructed from its atom graph. The theorem \ref{thm1} below answers this question.

\begin{theorem}\label{thm1}
    If $B_1$ and $B_2$ are finite $epBA$. then $B_1\cong B_2$ iff $AG(B_1)\cong AG(B_2)$.
\end{theorem}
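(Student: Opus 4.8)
The plan is to prove the easy direction directly and to obtain the hard direction by showing that a finite $epBA$ is reconstructible, up to isomorphism, from its atom graph.

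For the forward implication, suppose $\phi\colon B_1\to B_2$ is an isomorphism of partial Boolean algebras. Then $\phi$ and $\phi^{-1}$ preserve $0,1,\neg,\land,\lor$ and the compatibility relation $\odot$, hence they preserve the derived order $\leq$; consequently $\phi$ restricts to a bijection $A(B_1)\to A(B_2)$ that both preserves and reflects adjacency ($\odot$) and distinctness, and this bijection is exactly an isomorphism $AG(B_1)\cong AG(B_2)$. For the converse the strategy is: (i) identify the maximal contexts of a finite $epBA$ $B$ with the maximal cliques of $AG(B)$; (ii) note that each maximal context is the power-set algebra of its clique; (iii) show that $B$, with its partial operations, is recovered by gluing these power-set algebras along their pairwise intersections, and that this gluing datum is itself determined by $AG(B)$; then transport everything along a given graph isomorphism.

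For step (i), by Lemma~\ref{lem1_finite} a maximal context $C$ has $A(C)\subseteq A(B)$, so $A(C)$ is a clique of $AG(B)$; an argument paralleling the proof of Lemma~\ref{lem1_finite} shows it is a \emph{maximal} clique, since an atom $a\in A(B)$ compatible with all of $A(C)$ would (as $C=\langle A(C)\rangle$ is generated by its atoms) be compatible with all of $C$, hence lie in $C$, hence be an atom of $C$. Conversely a maximal clique $K$ spans a context $\langle K\rangle$ (intersections of contexts are contexts, so $\langle S\rangle$ exists for any compatible $S$), and the same argument forces $\langle K\rangle$ to be maximal with $A(\langle K\rangle)=K$; moreover $C\cong\mathcal{P}(A(C))$ via $c\mapsto\{a\in A(C):a\leq c\}$. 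Since $B$ is finite, every element lies in a maximal context, so $B$ is the union — indeed the colimit in $pBA$ — of the diagram formed by its maximal contexts and their pairwise intersections with the inclusion maps. Now, given a graph isomorphism $\psi\colon AG(B_1)\to AG(B_2)$, for each maximal clique $K$ of $AG(B_1)$ the restriction $\psi|_K$ induces a Boolean isomorphism $\phi_K\colon\langle K\rangle\to\langle\psi(K)\rangle$ through $\mathcal{P}(K)\cong\mathcal{P}(\psi(K))$.

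The crux is that the $\phi_K$ agree on overlaps: for maximal cliques $K,K'$ and $x\in\langle K\rangle\cap\langle K'\rangle$ one needs $\phi_K(x)=\phi_{K'}(x)$; equivalently, for $S\subseteq K$ and $S'\subseteq K'$ the relation $\bigvee S=\bigvee S'$ must depend only on the abstract graph. This reduces to the key lemma that $C\cap C'=\langle A(C)\cap A(C')\rangle$ for maximal contexts of a finite $epBA$ — two maximal contexts share no more elements than their common atoms together with De Morgan's laws force — and it is precisely here that exclusivity enters: without it an atom of one context may sit strictly below, hence be split by, atoms of another, creating uncontrolled identifications (cf. $B_2$ in Fig.~\ref{pB2}). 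I expect this lemma to be the main obstacle. The natural route is to take $x\in C\cap C'$ with atom-supports $S=\{a\in A(C):a\leq x\}$ and $S'=\{a\in A(C'):a\leq x\}$, observe that $S\cap A(C')=S'\cap A(C)$ (a common atom lies below $x$ on both sides or neither), and then use exclusivity/transitivity — equivalently Lemma~\ref{lem1_finite}, or the Abramsky–Barbosa equivalence of exclusivity with transitivity — to argue that the ``private'' parts $S\setminus A(C')$ and $S'\setminus A(C)$ are each either empty or an entire complementary block, which forces $x\in\langle A(C)\cap A(C')\rangle$. Granting this, the $\phi_K$ glue to a well-defined bijection $\phi\colon B_1\to B_2$; it preserves $0,1,\neg$, it preserves $\land,\lor$ and $\odot$ on each maximal context, and since compatibility and the partial operations of any $pBA$ are witnessed inside maximal contexts, $\phi$ is a $pBA$-isomorphism, completing the proof.
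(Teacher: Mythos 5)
Your overall architecture is sound and close in spirit to the paper's: the forward direction is fine, the identification of maximal contexts with maximal cliques and of each maximal context with the power set of its atom clique works, and you correctly locate the crux at the well-definedness of the glued map, i.e.\ at deciding from the graph alone when $\bigvee S=\bigvee S'$ for $S,S'$ subsets of (possibly different) maximal cliques. The problem is that the key lemma you propose to resolve this, $C\cap C'=\langle A(C)\cap A(C')\rangle$ for maximal contexts $C,C'$, is false. Counterexample: in $\mathbb{C}^4$ take orthonormal bases $\{e_1,e_2,e_3,e_4\}$ and $\{f_1,f_2,f_3,f_4\}$ with $\mathrm{span}(f_1,f_2)=\mathrm{span}(e_1,e_2)$, $\mathrm{span}(f_3,f_4)=\mathrm{span}(e_3,e_4)$, and no $f_i$ proportional to any $e_j$. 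In a finite quantum system containing the corresponding rank-one projectors, $C=\langle e_1,\dots,e_4\rangle$ and $C'=\langle f_1,\dots,f_4\rangle$ are maximal contexts with $A(C)\cap A(C')=\emptyset$, so $\langle A(C)\cap A(C')\rangle=\{0,1\}$; yet $e_1\vee e_2=f_1\vee f_2$ is a nontrivial element of $C\cap C'$, and the ``private'' supports $\{e_1,e_2\}$ and $\{f_1,f_2\}$ are neither empty nor entire complementary blocks. So overlaps of maximal contexts are genuinely not controlled by common atoms, and your maps $\phi_K$ are not yet known to agree on overlaps; the proof has a real gap at its central step.

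The correct resolution, which is what the paper's Lemma \ref{lem2} provides, is a different and purely graph-theoretic criterion: for $A_1\subseteq A(C_1)$ and $A_2\subseteq A(C_2)$ one has $\bigvee A_1=\bigvee A_2$ if and only if $(A(C_1)\setminus A_1)\cup A_2$ and $A_1\cup(A(C_2)\setminus A_2)$ are each the atom set of a maximal context, equivalently a maximal clique of $AG(B)$; exclusivity (via Lemma \ref{lem1_finite}) is exactly what makes both directions of this equivalence work. In the example above the criterion correctly certifies $e_1\vee e_2=f_1\vee f_2$, since $\{e_3,e_4,f_1,f_2\}$ and $\{e_1,e_2,f_3,f_4\}$ are orthonormal bases and hence maximal cliques. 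If you replace your intersection lemma by this criterion, the rest of your gluing argument goes through and you recover essentially the paper's proof of Theorem \ref{thm5} (from which Theorem \ref{thm1} follows); as written, however, the proposal does not establish well-definedness.
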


The details of proof is given in the Appendix \ref{A}.\par

Theorem \ref{thm1} indicates that there exists a one-to-one correspondence between finite $epBA$ and their atom graphs, and it is easy to prove that $B_1=B_2$ iff $AG(B_1)=AG(B_2)$. Therefore, even though a system admits no hidden-variables, it can be reconstructed by the hidden-variables of maximal contexts, provided that it satisfies local consistency and exclusivity. And the structure of atom graph determines the structure of contextual system.\par

Next we consider whether the states on an $epBA$ can be reconstructed by its atoms.

\begin{definition}
If $G$ is a simple graph whose cliques have finite sizes, a \textbf{state} on $G$ is defined by a map $p:V(G)\rightarrow[0,\ 1]$ such that for each maximal clique $C$ of $G$, $\sum_{v\in C}p(v)=1$. Use $s(G)$ to denote the states set on $G$.
\end{definition}

For $B\in epBA$, a maximal clique of $AG(B)$ is a maximal complete subgraph, which corresponds to a maximal Boolean subalgebra. Therefore, the states on $AG(B)$ describe the probability distributions on $A(B)$. The state on a finite Boolean algebra can be reconstructed by the probability distributions on its atoms set\citep{Abramsky2020The}, and we can generalize the conclusion to finite $epBA$.

\begin{theorem}\label{thm2}
If $B$ is a finite $epBA$, then $s(B)\cong s(AG(B))$.
\end{theorem}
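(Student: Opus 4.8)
The plan is to construct an explicit bijection between $s(B)$ and $s(AG(B))$ and check it preserves the relevant structure. Given a state $p \in s(B)$, define $\varphi(p): V(AG(B)) \to [0,1]$ by restriction, $\varphi(p)(a) = p(a)$ for each atom $a \in A(B)$. First I would verify $\varphi(p)$ is indeed a state on the graph: for any maximal clique $C$ of $AG(B)$, by Lemma \ref{lem1_finite} together with the correspondence between maximal cliques and maximal Boolean subalgebras, $C = A(D)$ for a maximal Boolean subalgebra $D \subseteq B$; since $D$ is a finite Boolean algebra, $D \cong \mathcal{P}(A(D))$, and the state axioms restricted to $D$ (finite additivity on the pairwise-compatible family $A(D)$, $p(0)=0$, $p(\neg x) = 1 - p(x)$) force $\sum_{a \in A(D)} p(a) = p(1) = 1$. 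So $\varphi(p) \in s(AG(B))$.

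Next I would construct the inverse. Given $q \in s(AG(B))$, define $\psi(q): B \to [0,1]$ as follows: for $x \in B$, pick any maximal Boolean subalgebra $D$ with $x \in D$ (exists since any single element is a context, hence lies in a maximal one), and set $\psi(q)(x) = \sum_{a \in A(D),\, a \leq x} q(a)$. The key point to check is \emph{well-definedness}: if $x$ lies in two maximal Boolean subalgebras $D$ and $D'$, the two sums agree. This is where local consistency / exclusivity does the real work — I would argue that the atoms of $D$ below $x$ and the atoms of $D'$ below $x$, while possibly different elements, partition $x$ in a compatible way; more carefully, I expect to use that for an atom $a \in A(D)$ with $a \le x$, the quantity $q(a)$ can itself be re-expanded over atoms of $D'$ that lie below $a$ (using that $a$, being an element of $B$, has its own "below-atoms" in any maximal context containing it, and the state axiom on the Boolean algebra generated). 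Establishing this coherence is the step I expect to be the main obstacle, and it is precisely the analogue of the well-definedness argument underlying Theorem \ref{thm1}; I would lean on the structural description of finite $epBA$ obtained there (every element is a join of atoms of $B$, and atoms of any maximal context are atoms of $B$).

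Once well-definedness holds, checking that $\psi(q)$ is a state on $B$ is routine: $\psi(q)(0) = 0$ is immediate; $\psi(q)(\neg x) = 1 - \psi(q)(x)$ follows by working inside a single maximal $D$ containing $x$ (hence $\neg x$) and using $\sum_{a \in A(D)} q(a) = 1$; and finite additivity $\psi(q)(x \lor y) + \psi(q)(x \land y) = \psi(q)(x) + \psi(q)(y)$ for $x \odot y$ follows by choosing a maximal $D$ containing both $x$ and $y$ (it exists because $\{x,y\}$ is a context) and using inclusion–exclusion on the atom sets $\{a \in A(D): a \le x\}$ and $\{a \in A(D): a \le y\}$ inside the genuine Boolean algebra $D$.

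Finally I would verify $\varphi$ and $\psi$ are mutually inverse: $\varphi(\psi(q))(a) = q(a)$ for atoms $a$ is immediate since an atom $a$ of $B$ lies below only itself among atoms of a maximal context containing it; and $\psi(\varphi(p))(x) = \sum_{a \le x} p(a) = p(x)$ holds because inside any maximal Boolean subalgebra $D \ni x$ we have $x = \bigvee\{a \in A(D): a \le x\}$ and $p$ restricted to $D$ is additive. Both maps are clearly affine, so $s(B) \cong s(AG(B))$ as convex sets (and in particular as sets), which is the claimed isomorphism.
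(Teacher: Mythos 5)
Your overall architecture is the same as the paper's: restrict a state to the atoms to get a graph state, and recover a state on $B$ from a graph state by summing over the atoms below an element inside some maximal context. The verification of the state axioms and of mutual inverseness is routine and you handle it correctly. The one step you flag as the main obstacle --- well-definedness of $\psi(q)(x)$ when $x$ lies in two maximal Boolean subalgebras $D$ and $D'$ --- is indeed the crux, but the mechanism you sketch for it does not work. You propose to ``re-expand'' $q(a)$, for $a\in A(D)$ with $a\le x$, over the atoms of $D'$ lying below $a$. By Lemma \ref{lem1_finite}, however, every atom of a maximal Boolean subalgebra is already an atom of $B$, so $a$ admits no nontrivial decomposition in any context: the only atoms of $B$ below $a$ are $a$ itself (and, if $a\notin D'$, there need be no atom of $D'$ below $a$ at all, since $\le$ presupposes compatibility). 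So the refinement argument is a dead end; there is no common refinement of $A(D)$ and $A(D')$ to pass through.

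The paper's actual argument (Lemma \ref{lem2}) goes through the \emph{complement} rather than a refinement, and this is precisely where exclusivity enters. Write $x=\bigvee A_1=\bigvee A_2$ with $A_1\subseteq A(D)$, $A_2\subseteq A(D')$, and let $A_1'=A(D)\setminus A_1$. Every $a_1'\in A_1'$ satisfies $a_1'\le\neg x$ and every $a_2\in A_2$ satisfies $a_2\le x$, so $a_1'\bot a_2$, hence $a_1'\odot a_2$ by LEP. Thus $A_1'\cup A_2$ is a pairwise-compatible set whose join is $1$, so it is the full atom set of some maximal Boolean subalgebra $D_1$, i.e.\ a maximal clique of $AG(B)$; one also checks $A_1'\cap A_2=\emptyset$. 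Applying the normalization $\sum_{a\in A(D_1)}q(a)=1$ to this clique and $\sum_{a\in A(D)}q(a)=1$ to $A(D)$ gives $\sum_{a\in A_2}q(a)=1-\sum_{a\in A_1'}q(a)=\sum_{a\in A_1}q(a)$, which is the coherence you need. Without this complement construction the proof is incomplete, so you should replace your sketched refinement step with this argument (or an equivalent one).
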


The proof is given in Appendix \ref{A}.\par

For details, if $p\in s(B)$, then $p|_{A(B)}$ is a state on $AG(B)$. Conversely, if $p'$ is a state on $AG(B)$, there exists a unique state $p\in s(B)$ such that $p'=p|_{A(B)}$. Therefore, the states on $AG(B)$ determine the states on $B$, just like the case in hidden-variable theory.\par

We call the theorems \ref{thm1} and \ref{thm2} the graph structure theorems of finite $epBA$. They reveal that contextual systems with local consistency and exclusivity are determined by their atom graphs. Therefore, atom graph can be considered the mathematical foundation of quantum mechanics, just as hidden-variable space underlies the classical world. We will introduce how to describe quantum mechanics using $epBA$ in the next section.

\section{Quantum system}\label{sec_quantum}
Quantum mechanics points out that the world is essentially random, thus all the physical phenomena can be depicted by "event" and "the probability of event". An event refers to the outcome of a single measurement, or equivalently, the value of an observable. If the considered Hilbert space is $\mathcal{H}$, then an observable $A$ is depicted by a bounded self-adjoint operator $\hat{A}$ onto $\mathcal{H}$. An event is a proposition like $A\in\Delta$ ($\Delta$ is a Borel set of $\mathbb{R}$), which is depicted by a projector $\hat{P}$ onto the corresponding eigenspace of $\hat{A}$. Therefore, the event algebra of quantum mechanics is the projector algebra. In 1936, Birkhoff and Von Neumann introduced the standard quantum logic based on this perspective\citep{Birkhoff1936The}.\par

Let $P(\mathcal{H})$ denote the set of projectors onto $\mathcal{H}$. If $\hat{P}_1,\ \hat{P}_2$ are projectors onto $S_1,\ S_2$, $\hat{P}_1\land\hat{P}_2$ is defined to be the projector onto $S_1\cap S_2$, and $\lnot\hat{P}_1$ is defined to be the projector onto $S_1^{\bot}$. Then $\hat{P}_1\lor\hat{P}_2=\lnot(\lnot\hat{P}_1\land\lnot\hat{P}_2)$. If these operations are totally defined, then $P(\mathcal{H})$ is an orthocomplemented modular lattice, which presents several disadvantages such as not satisfying the distributive law \citep{Doering2010Topos}. Therefore, Kochen and Specker adopt $P(\mathcal{H})$ as partial Boolean algebra in 1960s\citep{Kochen1967The}. Define $\hat{P}_1\odot\hat{P}_2$ if $\hat{P}_1\hat{P}_2=\hat{P}_2\hat{P}_1$, that is, the compatibility relation. We have ${P}(\mathcal{H})=({P}(\mathcal{H}),\odot;\land,\lor,\lnot,\hat{0},\hat{1})$ is a partial Boolean algebra, where $\hat{0}$ is the zero projector, and $\hat{1}$ is the projector onto $\mathcal{H}$. A context ,or a Boolean subalgebra of $P(\mathcal{H})$, consists of mutually compatible projectors. Any ``quantum system" on $\mathcal{H}$ can be treated as partial Boolean subalgebra of $P(\mathcal{H})$.\par

\begin{definition}
A \textbf{quantum system} is defined by a partial Boolean subalgebra of $P(\mathcal{H})$ for some Hilbert space $\mathcal{H}$. We use $QS$ to denote the collection of all quantum systems.
\end{definition}

Quantum systems satisfy local consistency because the probability of $\hat{P}$ is $tr(\rho\hat{P})$, solely dependent on quantum state $\rho$ and independent of the contexts. Furthermore, if $\hat{P}_1\bot\hat{P}_2$, in other words, there is a projector $\hat{P}$ such that $\hat{P}_1\leq\hat{P}$ and $\hat{P}_2\leq\lnot\hat{P}$, then $\hat{P}_1,\hat{P}_2$ must be orthogonal, so $\hat{P}_1\odot\hat{P}_2$. Therefore, quantum systems satisfy exclusivity. In summary, we have

$$QS\subseteq epBA\subseteq pBA$$.

In practice, the primary consideration is finite $QS$. For example, five projectors on a 3-dimensional space are shown in Fig.\ref{fig2}.\par

\begin{figure}[h]
    \centering
    \includegraphics[width=0.4\linewidth]{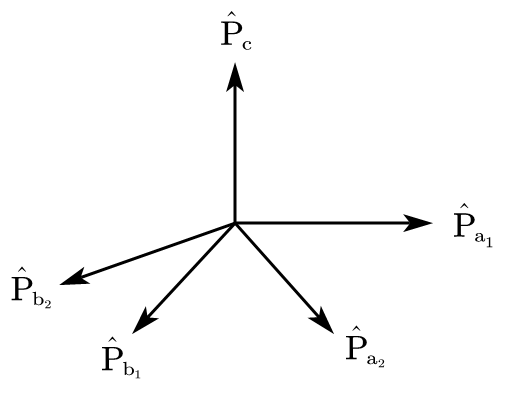}
    \caption{Five 3-dimensional rank-1 projectors. $\hat{P}_c,\hat{P}_{a_1},\hat{P}_{b_1}$ are pairwise orthogonal and $\hat{P}_c, \hat{P}_{a_2},\hat{P}_{b_2}$ are pairwise orthogonal}\label{fig2}
 \end{figure}

The five projectors generate $Q\in QS$, which is isomorphic to $B_1$ in Fig.\ref{pB1}. And $AG(B_1)\cong AG(Q)$, which is identical to the orthogonal graph of $\{\hat{P}_c,$ $\hat{P}_{a_1},$ $\hat{P}_{b_1},$ $\hat{P}_{a_2},$ $\hat{P}_{b_2}\}$. Any quantum experiment chooses a finite quantum system as its measurement scenario. The system of CHSH experiment for Bell nonlocality has 16 atoms\citep{Clauser1969Proposed}, and the system of Klyachko-Can-Binicioglu-Shumovsky (KCBS) experiment for quantum contextuality is generated by 5 atoms\citep{Alexander2008Simple}.\par

Since finite $QS\subseteq$ finite $epBA$, we obtain the proposition below from theorem \ref{thm1}.

\begin{proposition}\label{prop1}
If $Q_1$ and $Q_2$ are finite $QS$. then $Q_1\cong Q_2$ iff $AG(Q_1)\cong AG(Q_2)$.
\end{proposition}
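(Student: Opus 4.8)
The plan is to read off Proposition~\ref{prop1} from Theorem~\ref{thm1} via the inclusion $QS\subseteq epBA$: a finite quantum system is in particular a finite exclusive partial Boolean algebra, an isomorphism of quantum systems is by definition an isomorphism of the underlying partial Boolean algebras, and an atom graph depends only on the $pBA$-structure. So essentially no new argument is needed beyond Theorem~\ref{thm1}; the task is just to connect the two statements cleanly, checking that the finiteness and isomorphism notions line up.

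First I would dispose of the forward implication, which holds for all of $pBA$ and uses neither finiteness nor exclusivity. Let $\varphi:Q_1\to Q_2$ be a $pBA$-isomorphism. Then $\varphi$ is a bijection with $\varphi(\hat{0})=\hat{0}$, commuting with $\lnot$ and preserving $\odot$ together with the partial operations $\land,\lor$; hence it preserves the derived order $\leq$, and so does $\varphi^{-1}$. Consequently, for $a\in A(Q_1)$, any $x\leq\varphi(a)$ in $Q_2$ satisfies $\varphi^{-1}(x)\leq a$, forcing $\varphi^{-1}(x)\in\{\hat{0},a\}$ and thus $x\in\{\hat{0},\varphi(a)\}$; so $\varphi(a)\in A(Q_2)$, and symmetrically $\varphi^{-1}$ maps $A(Q_2)$ into $A(Q_1)$. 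Hence $\varphi$ restricts to a bijection $A(Q_1)\to A(Q_2)$, and since it preserves $\odot$ and sends distinct elements to distinct elements, this restriction is an isomorphism of simple graphs $AG(Q_1)\cong AG(Q_2)$.

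For the converse, assume $AG(Q_1)\cong AG(Q_2)$. Since $Q_1,Q_2$ are finite and $QS\subseteq epBA$, both are finite $epBA$, so Theorem~\ref{thm1} applies and yields a $pBA$-isomorphism $Q_1\cong Q_2$, which is precisely an isomorphism in $QS$. This finishes the proof.

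I do not anticipate a real obstacle: all the substantive work of rebuilding a finite $epBA$ from the adjacency data of its atom graph is already carried out in the proof of Theorem~\ref{thm1} in Appendix~\ref{A}. The only subtlety is conceptual rather than technical — one should note that although the $epBA$ produced by that reconstruction need not a priori come presented as a partial Boolean subalgebra of some $P(\mathcal{H})$, in the present proposition we already start from two genuine quantum systems, so Theorem~\ref{thm1} is only asked to supply an abstract isomorphism between them, and that automatically counts as an isomorphism of quantum systems.
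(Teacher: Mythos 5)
Your proposal is correct and matches the paper's own treatment: Proposition~\ref{prop1} is obtained there exactly by noting that finite $QS\subseteq$ finite $epBA$ and invoking Theorem~\ref{thm1}. The extra details you supply (atoms mapping to atoms under a $pBA$-isomorphism, the forward direction holding in general) are consistent with, and already subsumed by, the paper's appeal to that theorem.
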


On another hand, if $Q\in QS$, then a quantum state $\rho$ induces a map $\rho: Q\to[0,1]$, $\rho(\hat{P})=tr(\rho\hat{P})$. It is easy to prove that $\rho$ is a state on $Q$, therefore $\rho$ is determined by the probabilities of $A(Q)$, that is, the atom projectors of $Q$. Let $qs(Q)$ denote the states on $Q$ induced by quantum states. Derived from the theorem \ref{thm2},  we have

\begin{proposition}\label{prop2}
If $Q$ is a finite $QS$, then $qs(Q)\subseteq s(Q)\cong s(AG(Q))$
\end{proposition}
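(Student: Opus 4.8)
The plan is to derive Proposition \ref{prop2} as a direct consequence of Theorem \ref{thm2} together with the inclusion $QS\subseteq epBA$ established above; the only genuine work is checking that a density operator really defines a state in the sense of Definition \ref{def_state}.

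First I would verify $qs(Q)\subseteq s(Q)$. Fix $Q\in QS$, realized as a partial Boolean subalgebra of $P(\mathcal{H})$, and let $\rho$ be a density operator on $\mathcal{H}$, giving $\rho:Q\to[0,1]$, $\rho(\hat{P})=\mathrm{tr}(\rho\hat{P})$; positivity of $\rho$ with $\hat{0}\leq\hat{P}\leq\hat{1}$ forces the values into $[0,1]$. The three axioms then follow from linearity of the trace and $\mathrm{tr}(\rho)=1$: (i) $\rho(\hat{0})=\mathrm{tr}(\rho\hat{0})=0$; (ii) $\rho(\lnot\hat{P})=\mathrm{tr}(\rho(\hat{1}-\hat{P}))=1-\rho(\hat{P})$; (iii) if $\hat{P}_1\odot\hat{P}_2$, i.e. $\hat{P}_1\hat{P}_2=\hat{P}_2\hat{P}_1$, then the operator identities $\hat{P}_1\land\hat{P}_2=\hat{P}_1\hat{P}_2$ and $\hat{P}_1\lor\hat{P}_2=\hat{P}_1+\hat{P}_2-\hat{P}_1\hat{P}_2$ hold (both sides are projectors onto the intersection, resp. the span of the ranges, of commuting projectors), whence $(\hat{P}_1\lor\hat{P}_2)+(\hat{P}_1\land\hat{P}_2)=\hat{P}_1+\hat{P}_2$ as operators, and applying $\mathrm{tr}(\rho\,\cdot\,)$ gives $\rho(\hat{P}_1\lor\hat{P}_2)+\rho(\hat{P}_1\land\hat{P}_2)=\rho(\hat{P}_1)+\rho(\hat{P}_2)$. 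Hence $\rho\in s(Q)$, and since $\rho$ was arbitrary, $qs(Q)\subseteq s(Q)$.

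Second, since $Q$ is a finite quantum system and $QS\subseteq epBA$, $Q$ is a finite $epBA$, so Theorem \ref{thm2} applies verbatim and yields $s(Q)\cong s(AG(Q))$; concretely the isomorphism sends $p$ to $p|_{A(Q)}$, with inverse the unique extension of a graph state back to $Q$. Chaining the two facts gives $qs(Q)\subseteq s(Q)\cong s(AG(Q))$, which is the assertion. Under this isomorphism a quantum state $\rho$ corresponds to the assignment $\hat{P}\mapsto\mathrm{tr}(\rho\hat{P})$ on atom projectors, recovering the intuition that $\rho$ is pinned down by the probabilities it assigns to $A(Q)$.

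There is no serious obstacle: the structural content is entirely carried by Theorem \ref{thm2}. The one point requiring mild care is step (iii) — that for commuting projectors the partial Boolean algebra operations $\land,\lor$ on $P(\mathcal{H})$ coincide with $\hat{P}_1\hat{P}_2$ and $\hat{P}_1+\hat{P}_2-\hat{P}_1\hat{P}_2$, so that additivity of the state reduces to the exact additivity of the trace rather than merely an inequality; this is precisely where the hypothesis $\hat{P}_1\odot\hat{P}_2$ enters. I would also flag that it suffices to prove $qs(Q)\subseteq s(Q)$, not equality, so no converse characterization of which abstract states are quantum-realizable is needed.
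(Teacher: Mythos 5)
Your proposal is correct and takes essentially the same route as the paper: the paper merely notes that a density operator induces a state on $Q$ (left as ``easy to prove'') and then derives the isomorphism $s(Q)\cong s(AG(Q))$ from Theorem \ref{thm2} via the inclusion of finite $QS$ in finite $epBA$. Your explicit check of the three state axioms, in particular the operator identity $(\hat{P}_1\lor\hat{P}_2)+(\hat{P}_1\land\hat{P}_2)=\hat{P}_1+\hat{P}_2$ for commuting projectors, correctly supplies the detail the paper omits.
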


It means that any quantum state $\rho$ is determined by $\rho|_{A(Q)}$. The propositions \ref{prop1} and \ref{prop2} are the graph structure theorems for finite quantum systems. Therefore, the finite quantum systems are still determined by hidden-variables, but these hidden-variables need to be considered as a graph. We will generate these conclusions to more general quantum systems in the next section.\par

\section{The infinite cases}\label{sec_infinite}

The graph structure theorems also hold for the infinite quantum system on $\mathcal{H}$ of finite dimension. To prove it, we define the dimension of $epBA$.\par

\begin{definition}
If $B\in epBA$, $d(B)$ denotes the \textbf{dimension} of $B$. If $A(C)$ is finite for any maximal Boolean subalgebra $C$ of $B$, then $d(B):=\max\limits_{C}|A(C)|$ where $|A(C)|$ is the size of $A(C)$. Otherwise, $d(B):=\infty$.
\end{definition}

If $Q$ is a finite quantum system on $\mathcal{H}$ of infinite dimension, then there exists a $Q'\in QS$ on $\mathcal{H}'$ of finite dimension such that $Q'\cong Q$. Therefore, finite quantum systems are all finite dimensional $epBA$. Furthermore we have,

\begin{theorem}\label{thm3}
 If $B_1,B_2$ are finite dimensional $epBA$. then $B_1\cong B_2$ iff $AG(B_1)\cong AG(B_2)$.
\end{theorem}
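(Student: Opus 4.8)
\textbf{Proof proposal for Theorem \ref{thm3}.}

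The plan is to reduce the infinite-dimensional-looking case to the already-established finite case, Theorem \ref{thm1}, by showing that a finite dimensional $epBA$ has only finitely many atoms and is in fact a finite $epBA$ in disguise. The forward implication ($B_1\cong B_2$ implies $AG(B_1)\cong AG(B_2)$) is immediate, since an isomorphism of partial Boolean algebras preserves $0$, $\lnot$, $\land$, $\lor$, $\odot$, hence the relation $\leq$, hence the property of being an atom, hence the adjacency relation of the atom graph; this direction needs only a sentence. The substance is the converse.

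First I would prove the key structural claim: if $d(B)=n<\infty$, then $|A(B)|<\infty$, and moreover $B$ itself is finite with $B=\bigcup_C C$ over the (finitely many) maximal Boolean subalgebras $C$. The argument runs as follows. By Lemma \ref{lem1_finite}'s underlying idea, every element $b\neq 0$ of $B$ lies in some maximal Boolean subalgebra $C$, and since $|A(C)|\le n$, $b$ is a finite join of atoms of $C$; by (the natural infinite-dimensional analogue of) Lemma \ref{lem1_finite} each such atom of $C$ is an atom of $B$. So every nonzero element of $B$ is a finite join of elements of $A(B)$, and $A(B)\subseteq\bigcup_C A(C)$. It remains to bound $|A(B)|$: I would argue that distinct maximal Boolean subalgebras contribute atoms that together still form a finite set. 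Here is where exclusivity does the work — if $a,a'\in A(B)$ with $a\bot a'$ they are compatible, and in a finite dimensional $epBA$ a maximal clique of mutually compatible atoms has size $\le n$; one then shows the atom graph $AG(B)$ has only finitely many vertices because it is covered by finitely many such $n$-cliques. This last covering fact is the crux: I expect to establish it by showing that if $AG(B)$ had infinitely many vertices, one could extract either an infinite clique (contradicting $d(B)=n$) or an infinite independent set, and then trace the latter back to infinitely many distinct maximal contexts, contradicting the finiteness that follows once every element is a finite join of atoms drawn from a controlled pool.

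Once $B_1,B_2$ are known to be finite $epBA$, Theorem \ref{thm1} applies verbatim: $AG(B_1)\cong AG(B_2)$ yields $B_1\cong B_2$. So the proof would conclude: finite dimensional $\Rightarrow$ finite (the new lemma), then invoke Theorem \ref{thm1}. I would package the new lemma as a standalone statement, perhaps ``every finite dimensional $epBA$ is a finite $epBA$,'' since it is the only genuinely new ingredient and is also what legitimizes the claim, made just before the theorem, that finite quantum systems on infinite-dimensional $\mathcal H$ are finite dimensional $epBA$.

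The main obstacle I anticipate is precisely the finiteness of $A(B)$: bounding the size of each maximal clique by $d(B)$ is routine, but ruling out infinitely many maximal cliques — equivalently, infinitely many atoms — requires care, because a priori a finite dimensional $epBA$ could glue together infinitely many $n$-element contexts along shared atoms (as an infinite graph each of whose maximal cliques has size $\le n$ certainly exists). The resolution must use that $B$ is generated, as a partial Boolean algebra, in a way constrained by its own dimension; I would look for the argument that the absence of hidden-variables plus exclusivity forces the gluing pattern to stabilize, or alternatively restrict attention — as the surrounding text implicitly does via the reduction ``there exists $Q'\in QS$ on $\mathcal H'$ of finite dimension such that $Q'\cong Q$'' — to $epBA$ arising as quantum systems, where finite-rank projectors on a finite-dimensional space are automatically finite in number once the generating set is finite. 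If the theorem is genuinely meant for all finite dimensional $epBA$ and not just finite $QS$, the honest statement of the obstacle is that one needs a combinatorial lemma: a connected simple graph all of whose maximal cliques have size $\le n$ and which is ``$epBA$-realizable'' has finitely many vertices — and isolating the right realizability hypothesis is the delicate point.
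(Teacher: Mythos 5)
Your reduction hinges on the claim that a finite dimensional $epBA$ has only finitely many atoms and is therefore a finite $epBA$; this claim is false, and you have in fact put your finger on exactly why in your last paragraph without drawing the conclusion. The definition of $d(B)$ only bounds the number of atoms \emph{per maximal context}; it says nothing about the number of maximal contexts or of atoms overall. The projector algebra $P(\mathcal{H})$ for $\mathcal{H}=\mathbb{C}^3$ is an $epBA$ of dimension $3$ (every maximal Boolean subalgebra is generated by an orthonormal basis and has exactly three atoms) with uncountably many atoms, one for each one-dimensional subspace, and it is precisely such systems that Theorem \ref{thm3} is designed to cover: the paper immediately applies Propositions \ref{prop3} and \ref{prop4} to $P(\mathcal{H})$ in order to restate Gleason's theorem. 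There is no ``realizability'' hypothesis that rescues finiteness, and restricting to finitely generated quantum systems would not save the intended scope either. So the proposed route (finite dimensional $\Rightarrow$ finite, then quote Theorem \ref{thm1}) cannot work; in the paper the logical dependence runs the other way, with Theorem \ref{thm1} obtained as a corollary of the statement that handles infinite algebras.

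What the paper actually does is prove Theorem \ref{thm5} for atomic and complete $epBA$ and observe that every finite dimensional $epBA$ is atomic and complete (each nonzero element lies in a maximal context with finitely many atoms, and those atoms are atoms of $B$ by Lemma \ref{lem1_finite}/\ref{lem1_inf}). Given a graph isomorphism $g:A(B_1)\to A(B_2)$, one defines $f(\bigvee A)=\bigvee g(A)$ directly and checks well-definedness using Lemma \ref{lem2}, which characterizes when two sets of atoms $A_1\subseteq A(C_1)$ and $A_2\subseteq A(C_2)$ satisfy $\bigvee A_1=\bigvee A_2$ purely in graph-theoretic terms: $A'_1\cup A_2$ and $A_1\cup A'_2$ must be atom sets of maximal contexts, i.e.\ maximal cliques of the atom graph. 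That lemma is the genuinely new ingredient your proposal is missing; it is what lets a graph isomorphism control joins across different contexts without ever assuming the set of atoms or of contexts is finite. Your forward implication, and your use of Lemma \ref{lem1_finite} to see that every nonzero element is a finite join of atoms of $B$, are both fine and agree with the paper.
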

\begin{theorem}\label{thm4}
If $B$ is a finite dimensional $epBA$, then $s(B)\cong s(AG(B))$.
\end{theorem}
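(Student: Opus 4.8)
The plan is to exhibit the restriction map $\Phi\colon s(B)\to s(AG(B))$, $\Phi(p)=p|_{A(B)}$, as the desired isomorphism, following the scheme of Theorem~\ref{thm2} but replacing finiteness of $B$ by finiteness of its individual contexts. The first step is to record that Lemma~\ref{lem1_finite} survives the passage to finite dimension: since $d(B)<\infty$, every maximal Boolean subalgebra $C\subseteq B$ is finite (indeed $|C|\le 2^{d(B)}$), so the proof of Lemma~\ref{lem1_finite} applies verbatim and yields $A(C)\subseteq A(B)$; conversely an atom of $B$ lying in $C$ is an atom of $C$. From this I would extract the structural dictionary used throughout: (i) every $x\in B$ lies in some maximal Boolean subalgebra $C$, and there $x=\bigvee\{a\in A(C):a\le x\}$ with the displayed atoms pairwise disjoint; (ii) the maximal cliques of $AG(B)$ are precisely the sets $A(C)$ with $C$ a maximal Boolean subalgebra (a clique extends to a maximal context, and a maximal context cannot be enlarged by a further compatible atom without leaving $C$).

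Granting (i)--(ii), $\Phi$ is well defined because for a maximal clique $A(C)$ the restriction $p|_C$ is a finitely additive probability measure on the finite Boolean algebra $C\cong\mathcal P(A(C))$, so $\sum_{a\in A(C)}p(a)=p(1)=1$; and $\Phi$ is injective because (i) gives $p(x)=\sum_{a\in A(C),\,a\le x}p(a)$ for any maximal $C\ni x$. For surjectivity I would, given $q\in s(AG(B))$, \emph{define} $\Psi(q)(x):=\sum_{a\in A(C),\,a\le x}q(a)$ for some maximal Boolean subalgebra $C$ containing $x$. The crux of the whole proof is that this value is independent of the choice of $C$.

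To see the independence, take $C_1,C_2$ maximal with $x\in C_1\cap C_2$ and set $S_i=\{a\in A(C_i):a\le x\}$ and $T_i=\{a\in A(C_i):a\le\neg x\}$, so $A(C_i)=S_i\sqcup T_i$. Exclusivity is exactly what lets one splice contexts: if $a\in S_1$ and $b\in T_2$ then $a\le x$ and $b\le\neg x$, so $a\bot b$, hence $a\odot b$ because $B$ is exclusive; thus $S_1\cup T_2$ is pairwise compatible. Extend it to a maximal Boolean subalgebra $C_3$. The subalgebra generated by $S_1$ lies in $C_3$ and contains $\bigvee S_1=x$, and likewise the subalgebra generated by $T_2$ contains $\bigvee T_2=\neg x$; hence $x,\neg x\in C_3$ and $S_1\sqcup T_2$ is a disjoint family of atoms of $B$ (hence of $C_3$) with join $1$, which forces $A(C_3)=S_1\sqcup T_2$. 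Since this is a maximal clique, $\sum_{S_1}q+\sum_{T_2}q=1$; symmetrically $\sum_{S_2}q+\sum_{T_1}q=1$; combining these with $\sum_{S_i}q+\sum_{T_i}q=1$ for $i=1,2$ gives first $\sum_{T_1}q=\sum_{T_2}q$ and then $\sum_{S_1}q=\sum_{S_2}q$, which is exactly the asserted independence.

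It then remains to verify that $\Psi(q)$ is a state and that $\Psi$ inverts $\Phi$: $\Psi(q)(0)=0$ and $\Psi(q)(\neg x)=1-\Psi(q)(x)$ are read off a single maximal context containing $x$ (so also $\Psi(q)(x)\in[0,1]$, being a nonnegative sum equal to $1-\Psi(q)(\neg x)$); for $x\odot y$ choose a maximal context containing both and multiply the pointwise identity $\mathbf 1[a\le x\vee y]+\mathbf 1[a\le x\wedge y]=\mathbf 1[a\le x]+\mathbf 1[a\le y]$ on its atoms $a$ by $q(a)$ and sum; $\Phi\Psi=\mathrm{id}$ since the only atom of a context below an atom $a$ is $a$ itself; and $\Psi\Phi=\mathrm{id}$ by injectivity of $\Phi$. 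As both maps are affine, $s(B)\cong s(AG(B))$. I expect the independence‑of‑context step to be the one genuine obstacle: it is the only place where $B\in epBA$ rather than merely $B\in pBA$ enters (mirroring the failure illustrated by $B_2$ in Section~\ref{sec_epBA}), and without it the local formula for $\Psi(q)$ would not even define a function on $B$. An alternative route, which would also reprove Theorem~\ref{thm3}, is to show that a finite‑dimensional $epBA$ is the directed union of its finite sub‑$epBA$'s and that $AG(-)$ commutes with such unions while $s(-)$ turns them into inverse limits, reducing everything to Theorems~\ref{thm1} and~\ref{thm2}; but the direct argument above seems more economical.
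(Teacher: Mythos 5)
Your proposal is correct and follows essentially the same route as the paper: the isomorphism is the restriction map to $A(B)$, injectivity comes from expanding each element as a join of atoms of a maximal context, and surjectivity hinges on the context-independence of the formula $\sum_{a\le x}q(a)$. Your splicing argument producing the maximal context $C_3$ with $A(C_3)=S_1\sqcup T_2$ via exclusivity is exactly the content (and the proof) of the paper's Lemma~\ref{lem2}, so no genuinely different ideas are involved.
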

The proof is given in Appendix \ref{A}.\par

If the dimension of $\mathcal{H}$ is finite, obviously the quantum systems $Q\in QS$ on $\mathcal{H}$ are finite dimensional $epBA$. Derived from the theorems \ref{thm3} and \ref{thm4}, we obtain the graph structure theorems for finite dimensional quantum systems as below.

\begin{proposition}\label{prop3}
If $Q_1,Q_2\in QS$ on $\mathcal{H}$ of finite dimension, then $Q_1\cong Q_2$ iff $AG(Q_1)\cong AG(Q_2)$.
\end{proposition}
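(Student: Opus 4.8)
The plan is to read Proposition~\ref{prop3} off Theorem~\ref{thm3}. That theorem already gives $B_1\cong B_2 \iff AG(B_1)\cong AG(B_2)$ for every pair of finite dimensional $epBA$, and the inclusion $QS\subseteq epBA$ has been noted above, so the only thing requiring justification is the claim (asserted as ``obvious'' just before the statement) that a quantum system on a finite dimensional Hilbert space is a \emph{finite dimensional} $epBA$. Once that is in hand, applying Theorem~\ref{thm3} to $Q_1$ and $Q_2$ is the whole proof; the forward implication in Theorem~\ref{thm3} is the routine fact that an isomorphism of partial Boolean algebras restricts to an isomorphism of atom graphs, since it preserves $0$, $\leq$ and $\odot$ and hence sends atoms to atoms and adjacency to adjacency.

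So first I would bound $d(Q)$ for $Q\in QS$ realised on a Hilbert space $\mathcal{H}$ with $\dim\mathcal{H}=n<\infty$. Let $C$ be any maximal Boolean subalgebra of $Q$. By the definition of $\odot$ on $P(\mathcal{H})$ the projectors in $C$ commute pairwise, so they share a common orthonormal eigenbasis $e_1,\dots,e_n$; hence each $\hat P\in C$ is the projector onto $\operatorname{span}\{e_i : i\in S_{\hat P}\}$ for a unique subset $S_{\hat P}\subseteq\{1,\dots,n\}$, and $\hat P\mapsto S_{\hat P}$ is an injective Boolean homomorphism of $C$ into the finite algebra $\mathcal{P}(\{1,\dots,n\})$. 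A subalgebra of a finite Boolean algebra is finite and atomic, so $C$ is finite, and since $\hat 1\in C$ corresponds to $\{1,\dots,n\}$, which is the join of the atoms, the atoms of $C$ correspond to the blocks of a partition of $\{1,\dots,n\}$ into nonempty pieces. In particular $|A(C)|\le n$, and taking the maximum over all maximal Boolean subalgebras gives $d(Q)\le n<\infty$; thus $Q$ is a finite dimensional $epBA$.

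Applying this to both $Q_1$ and $Q_2$ and then invoking Theorem~\ref{thm3} yields $Q_1\cong Q_2 \iff AG(Q_1)\cong AG(Q_2)$, which is the assertion (the isomorphism being one of partial Boolean algebras; no unitary implementing it is assumed or needed). I do not expect a genuine obstacle here, as the argument is a short reduction. The single step that actually consumes the dimensional hypothesis, and so is worth stating carefully, is the passage from ``the projectors of $C$ commute'' to ``$C$ is finite and atomic with $|A(C)|\le n$'': it relies on simultaneous diagonalisability on the finite dimensional $\mathcal{H}$ and genuinely fails in infinite dimension, where a maximal Boolean subalgebra may possess no atoms at all — which is precisely why the infinite dimensional situation needs its own treatment rather than following directly from Theorem~\ref{thm3}.
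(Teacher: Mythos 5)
Your proposal is correct and follows exactly the route the paper takes: Proposition~\ref{prop3} is obtained by observing that a quantum system on a finite dimensional $\mathcal{H}$ is a finite dimensional $epBA$ and then invoking Theorem~\ref{thm3}. The only difference is that you spell out the simultaneous-diagonalisation argument giving $d(Q)\le\dim\mathcal{H}$, which the paper leaves as ``obvious''; that filled-in step is sound.
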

\begin{proposition}\label{prop4}
If $Q\in QS$ on $\mathcal{H}$ of finite dimension, then $qs(Q)\subseteq$$s(Q)$$\cong s(AG(Q))$
\end{proposition}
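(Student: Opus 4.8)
The plan is to obtain Proposition \ref{prop4} as a corollary of Theorem \ref{thm4}, after checking two routine facts: that a quantum system on a finite dimensional Hilbert space really is a finite dimensional $epBA$, and that every density operator induces a state in the sense of Definition \ref{def_state}.

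First I would verify $d(Q)<\infty$. Since $QS\subseteq epBA$ it suffices to bound $|A(C)|$ for each maximal Boolean subalgebra $C\subseteq Q$. The projectors in $C$ pairwise commute, hence are simultaneously diagonalizable, so the atoms of $C$ are mutually orthogonal nonzero projectors whose ranks sum to $\dim\mathcal{H}$; therefore $|A(C)|\leq\dim\mathcal{H}<\infty$ and $d(Q)\leq\dim\mathcal{H}$. Theorem \ref{thm4} then applies and gives $s(Q)\cong s(AG(Q))$, the isomorphism being $p\mapsto p|_{A(Q)}$ with inverse the unique state extension.

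Next I would check $qs(Q)\subseteq s(Q)$. Given a density operator $\rho$ on $\mathcal{H}$, put $\rho(\hat P):=tr(\rho\hat P)$ for $\hat P\in Q$. Positivity and unit trace of $\rho$ force $0\leq tr(\rho\hat P)\leq 1$, so the range lies in $[0,1]$; $\rho(\hat 0)=0$ is immediate; $\rho(\lnot\hat P)=tr(\rho(\hat 1-\hat P))=1-\rho(\hat P)$ by linearity of the trace. For the additivity axiom, if $\hat P_1\odot\hat P_2$ then $\hat P_1,\hat P_2$ commute, so as operators $\hat P_1\land\hat P_2=\hat P_1\hat P_2$ and $\hat P_1\lor\hat P_2=\hat P_1+\hat P_2-\hat P_1\hat P_2$; applying $tr(\rho\,\cdot\,)$ and using linearity gives $\rho(\hat P_1\lor\hat P_2)+\rho(\hat P_1\land\hat P_2)=\rho(\hat P_1)+\rho(\hat P_2)$. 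Hence $\rho\in s(Q)$, so $qs(Q)\subseteq s(Q)$, and the chain $qs(Q)\subseteq s(Q)\cong s(AG(Q))$ follows.

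I do not expect a genuine obstacle: the substantive work is entirely inside Theorem \ref{thm4}, which reconstructs an arbitrary state on a finite dimensional $epBA$ from its restriction to the atoms. The only points requiring a little care are the first step — that finite dimensionality of $\mathcal{H}$ forces $d(Q)$ finite, which is why one argues through the common eigenbasis of a maximal commuting family rather than attempting to bound $Q$ directly — and the operator identity $\hat P_1\lor\hat P_2+\hat P_1\land\hat P_2=\hat P_1+\hat P_2$ for commuting projectors, which is what lets the quantum trace rule satisfy the abstract state axioms.
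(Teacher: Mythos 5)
Your proposal is correct and follows the same route as the paper, which obtains Proposition \ref{prop4} directly from Theorem \ref{thm4} after noting that a quantum system on a finite dimensional $\mathcal{H}$ is a finite dimensional $epBA$ and that $\hat P\mapsto tr(\rho\hat P)$ satisfies the state axioms. The paper leaves both of these verifications as ``obvious''/``easy to prove''; you have simply supplied the details (the bound $|A(C)|\leq\dim\mathcal{H}$ via mutual orthogonality of atoms, and the commuting-projector identity $\hat P_1\lor\hat P_2+\hat P_1\land\hat P_2=\hat P_1+\hat P_2$), both of which are sound.
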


For example, the Gleason's theorem\citep{Gleason1957Measures} indicates that the quantum states on $\mathcal{H}$ correspond to the functions on rank-1 projectors onto $\mathcal{H}$ when $3\leq d(\mathcal{H})<\infty$. Note that $P(\mathcal{H})\in QS$. We can restate the Gleason's theorem with the language of quantum system.

\begin{theorem}[Gleason\citep{Gleason1957Measures}]
If $3\leq d(\mathcal{H})<\infty$, then $qs(P(\mathcal{H}))=$$s(P(\mathcal{H}))$$\cong s(AG(P(\mathcal{H})))$.
\end{theorem}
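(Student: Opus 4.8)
The plan is to reduce the statement, through the graph-structure results already established, to the classical Gleason theorem. First I would make the three objects explicit. Since $d(\mathcal H)<\infty$, $P(\mathcal H)$ is a finite dimensional $epBA$: every maximal Boolean subalgebra is the power-set algebra of an orthonormal basis and hence has exactly $d(\mathcal H)$ atoms. The atoms $A(P(\mathcal H))$ are precisely the rank-$1$ projectors, and the maximal cliques of $AG(P(\mathcal H))$ are precisely the families $\{\hat P_{e_1},\dots,\hat P_{e_n}\}$ of rank-$1$ projectors onto the vectors of an orthonormal basis $\{e_i\}$. Consequently a state on $AG(P(\mathcal H))$ is exactly an assignment $f$ of a value in $[0,1]$ to every rank-$1$ projector — equivalently to every unit vector, up to phase — with $\sum_i f(\hat P_{e_i})=1$ for each orthonormal basis; this is precisely a frame function of weight $1$ in Gleason's sense.

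Next I would invoke the machinery in place. Proposition \ref{prop4} already gives $qs(P(\mathcal H))\subseteq s(P(\mathcal H))\cong s(AG(P(\mathcal H)))$, where the isomorphism is restriction to atoms, $p\mapsto p|_{A(P(\mathcal H))}$, and under it a quantum state $\rho$ corresponds to the frame function $\hat P\mapsto\mathrm{tr}(\rho\hat P)$. Thus the only thing left is the assertion that the assignment $\rho\mapsto(\hat P\mapsto\mathrm{tr}(\rho\hat P))$ maps the density operators \emph{onto} $s(AG(P(\mathcal H)))$. Granting this, the equality $qs(P(\mathcal H))=s(P(\mathcal H))$ follows: any $p\in s(P(\mathcal H))$ restricts to a frame function, which is represented by some density operator $\rho$; then $p$ and $\hat P\mapsto\mathrm{tr}(\rho\hat P)$ are two states on $P(\mathcal H)$ that agree on $A(P(\mathcal H))$, hence coincide by the uniqueness clause of Theorem \ref{thm4}, so $p\in qs(P(\mathcal H))$.

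The outstanding assertion is exactly Gleason's theorem in finite dimension, which I would quote from \citep{Gleason1957Measures}. For orientation, its argument runs as follows. One reduces the real case to $\dim=3$, using that the restriction of a frame function to any $3$-dimensional subspace is again a frame function and that a self-adjoint operator is determined by its compressions to $3$-dimensional subspaces. On $S^2$ the heart of the matter is to show that every nonnegative frame function is bounded, hence continuous (Gleason's regularity argument), hence the restriction of a quadratic form $v\mapsto\langle v,\hat Tv\rangle$ for a symmetric $\hat T$; nonnegativity forces $\hat T\ge 0$ and the normalization over a basis forces $\mathrm{tr}\,\hat T=1$, so $\hat T$ is a density operator. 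The complex case follows by viewing $\mathbb C^n$ as $\mathbb R^{2n}$, applying the real result, and using the invariance of $f$ under $v\mapsto e^{i\theta}v$ to conclude that the representing real-symmetric operator commutes with the complex structure and is therefore a complex self-adjoint, positive, trace-$1$ operator. Uniqueness of $\rho$ is immediate since $\mathrm{tr}(\rho\hat P_v)=\langle v,\rho v\rangle$.

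The main obstacle is precisely this last ingredient: Gleason's continuity/regularity theorem for frame functions on the $2$-sphere is the only genuinely nontrivial step, and it is exactly what the citation supplies. Within the present framework there is essentially nothing else to do — once one observes that a state on $AG(P(\mathcal H))$ is literally a frame function of weight $1$, which is immediate from the definitions of atom graph, maximal clique, and state on a graph, the theorem is just Gleason's result transported through Proposition \ref{prop4} and Theorem \ref{thm4}. I would also flag the role of the hypothesis $3\le d(\mathcal H)<\infty$: for $d=2$ there exist frame functions of weight $1$ that are not of the form $\mathrm{tr}(\rho\,\cdot\,)$, so there $qs(P(\mathcal H))\subsetneq s(P(\mathcal H))$, while finiteness of $d$ removes the boundedness and separability hypotheses needed in Gleason's infinite-dimensional version.
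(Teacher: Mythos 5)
Your proposal is correct and matches the paper's intent exactly: the paper offers no proof beyond the citation, since the theorem is just classical Gleason transported through Theorem \ref{thm4}, and your identification of states on $AG(P(\mathcal{H}))$ with weight-one frame functions, together with the uniqueness clause of Theorem \ref{thm4} to upgrade agreement on atoms to equality of states, supplies precisely the translation the paper leaves implicit. Your remarks on the $d=2$ failure and the role of finiteness also agree with the paper's discussion following the theorem.
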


For general $Q\in QS$, $qs(Q)=s(Q)$ may not hold, and usually $qs(Q)\subseteq s(Q)$. For example, if $\mathcal{H}$ is a 2-dimensional Hilbert space, $s(P(\mathcal{H}))$ contains $0-1$ states while $qs(P(\mathcal{H}))$ does not, so Gleason's theorem does not hold for $\mathcal{H}$ of dimension 2. Determining the $Q\in QS$ such that $qs(Q)=s(Q)$ is a meaningful question, because it indicates that $Q$ properly captures the features of quantum mechanics.\par

For the infinite dimensional $epBA$, we can generalize the theorem \ref{thm3}. If $B$ is a Boolean algebra, then $B$ is isomorphic to a power set algebra $\mathcal{P}(\Lambda)$ iff $B$ is atomic and complete. Therefore, atomic and complete Boolean algebra is the generalization of finite Boolean algebra. It is also true for partial Boolean algebra.\par

\begin{definition}
 Let $B\in pBA$. \par
 $B$ is said to be \textbf{atomic} if for each $x\in B$ and $x\neq 0$, there is $a\in A(B)$ such that $a\leq x$. \par
 $B$ is said to be \textbf{complete} if for each subset $S\subseteq B$ whose elements are pairwise compatible, $\bigvee S$ exists.\par
 Let $acepBA$ the atomic and complete $epBA$.
\end{definition}

The finite dimensional $B\in epBA$ is obviously atomic and complete. We have,

\begin{theorem}\label{thm5}
 If $B_1,B_2\in acepBA$. then $B_1\cong B_2$ iff $AG(B_1)\cong AG(B_2)$.
\end{theorem}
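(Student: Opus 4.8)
\textbf{Proof proposal for Theorem \ref{thm5}.}

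The plan is to reduce the infinite case to the finite-dimensional case (Theorem \ref{thm3}) by decomposing an $acepBA$ into its maximal Boolean subalgebras and reconstructing everything from the atom graph. First I would establish the analogue of Lemma \ref{lem1_finite} in the atomic-complete setting: if $C \subseteq B$ is a maximal Boolean subalgebra of $B \in acepBA$, then $A(C) \subseteq A(B)$. The argument from Lemma \ref{lem1_finite} carries over almost verbatim, but I must be careful that $C$ being atomic is not automatic — here I would use that $B$ is atomic to find, for any nonzero $c \in C$, an atom $a \in A(B)$ with $a \le c$, then use exclusivity (exactly as in Lemma \ref{lem1_finite}) to show $a$ is compatible with every element of $C$, hence $a \in C$ by maximality, so $C$ is atomic and its atoms are atoms of $B$. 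Completeness of $B$ then guarantees $C \cong \mathcal{P}(A(C))$.

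Next I would show that the atom graph determines $B$ as a set with its compatibility relation and operations. The key structural fact to prove is that the maximal Boolean subalgebras of $B$ correspond bijectively to the maximal cliques of $AG(B)$: given a maximal clique $K \subseteq A(B)$, the pairwise-compatible set $K$ is contained in a Boolean subalgebra, which sits inside a maximal one $C$; by the previous paragraph $A(C) \subseteq A(B)$ is a clique containing $K$, so $A(C) = K$ by maximality of $K$, and $C \cong \mathcal{P}(K)$ by atomicity and completeness. Conversely each maximal $C$ yields the maximal clique $A(C)$. Then every element $x \in B$ lies in some maximal context $C$ (since $\{x\} \cup \{x\}$ is pairwise compatible, extend to a maximal Boolean subalgebra), so $x$ is identified with the subset $\{a \in A(C) : a \le x\} \subseteq A(B)$; consistency of this identification across contexts that share $x$ follows from transitivity (equivalently LEP), exactly as in the finite case. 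Given an isomorphism $\varphi: AG(B_1) \to AG(B_2)$ of atom graphs, it maps maximal cliques to maximal cliques, hence induces a bijection on maximal Boolean subalgebras, each of which is reconstructed as a power set algebra; gluing these bijections along shared atoms gives a well-defined map $B_1 \to B_2$, and one checks it preserves $\odot$, $\land$, $\lor$, $\lnot$, $0$, $1$. The converse direction ($B_1 \cong B_2$ implies $AG(B_1) \cong AG(B_2)$) is immediate since an isomorphism of $pBA$ preserves atoms and compatibility.

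I expect the main obstacle to be the global consistency of the reconstruction — showing that the locally defined identifications (element $\leftrightarrow$ down-set of atoms, performed context by context) patch together into a single well-defined structure on all of $B$, and that the candidate map built from $\varphi$ is genuinely a morphism rather than just a family of compatible partial maps. This is where transitivity does the real work: it ensures $a \le x$ is an intrinsic relation between an atom and an element, independent of which maximal context witnesses it, so that $x \mapsto \{a \in A(B) : a \le x\}$ is well-defined, and it lets logical inferences chain across overlapping contexts. A secondary subtlety is verifying that $\bigvee$ of an arbitrary compatible family is computed correctly from the atom picture — here completeness of $B$ supplies the join, atomicity ensures it equals the join of the atoms below it, and the fact (to be checked) that any compatible family lies in a single maximal context reduces the computation to the power-set algebra $\mathcal{P}(A(C))$, where it is the ordinary union. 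Once these points are in place, the equivalence follows by the same bookkeeping as Theorem \ref{thm3}, with "finite" replaced throughout by "atomic and complete".
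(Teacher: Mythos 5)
Your overall strategy coincides with the paper's: establish $A(C)\subseteq A(B)$ for maximal Boolean subalgebras $C$, identify maximal contexts with maximal cliques of $AG(B)$, represent each element as a join of atoms of some maximal context, and transport these representations along the graph isomorphism clique by clique. The converse direction and the per-clique power-set reconstruction are fine, and your observation that atomicity of a maximal Boolean subalgebra must itself be derived from atomicity of $B$ (rather than assumed, as it is in the finite case) is a point the paper passes over silently.

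The one genuine gap is in the step you yourself single out as the main obstacle: consistency of the glued map. You justify it by saying transitivity makes $a\le x$ an intrinsic relation, so $x\mapsto\{a\in A(B):a\le x\}$ is well defined. That is true but beside the point: the graph isomorphism $\varphi$ sees only vertices and edges, not the order relation between atoms and non-atomic elements, so intrinsicness of $\le$ inside $B_1$ does not tell you that $\bigvee\varphi(A_1)=\bigvee\varphi(A_2)$ holds in $B_2$ when $A_1\subseteq A(C_1)$ and $A_2\subseteq A(C_2)$ are two atom-representations of the same element $x\in C_1\cap C_2$. What is needed, and what the paper isolates as Lemma \ref{lem2}, is a purely graph-theoretic criterion for equality of joins across contexts: writing $A_i'=A(C_i)-A_i$, one has $\bigvee A_1=\bigvee A_2$ iff $A_1'\cup A_2$ and $A_1\cup A_2'$ are atom sets of maximal Boolean subalgebras, i.e.\ maximal cliques of the atom graph. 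Exclusivity is used precisely to prove this criterion (from $a_1'\perp a_2$ one gets $a_1'\odot a_2$, so $A_1'\cup A_2$ lies in a common context whose atom set it must exhaust), and since $\varphi$ preserves maximal cliques the criterion transports to $B_2$, which is exactly what makes the gluing consistent and the resulting map injective. Your plan becomes a complete proof once this lemma is stated and proved; as written, ``follows from transitivity exactly as in the finite case'' appeals to an argument you have not supplied, since the finite case rests on the same lemma.
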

The proof is given in Appendix \ref{A}.\par

However, There is a challenge to achieve a generalization for the theorem \ref{thm4} in the infinite dimensional cases, because not all the events can be assigned probabilities. For example, the event $A\in\Delta$ has no probability if $\Delta\subseteq\mathbb{R}$ is not measurable. We discuss a potential method for the issue in the following text.\par

We relinquish the one-to-one correspondence between the states on atom graph and the system, and introduce a more abstract structure to characterize the infinite dimensional contextual systems which are determined by the atom graphs. For classical probability theory, the structure is a triplet $(\Lambda,B,p)$, where $B$ is a Boolean subalgebra of $\mathcal{P}(\Lambda)$ and $p$ is the probability measure. In the contextuality theory, we substitute a simple graph $G$ for $\Lambda$, and the algebra $B$ should be constructed by $G$. Therefore, we generalize the powerset operator $\mathcal{P}$ below.\par

\begin{definition}
 Let $G$ be a simple graph. If there is a $B\in acepBA$ such that $AG(B)=G$, then define $\mathcal{P}(G)=B$.
\end{definition}

Note that not any simple graph is the atom graph of an $acepBA$ (For instance, the 3-vertices graph $v_1-v_2-v_3$ cannot be any $pBA$'s atom graph), but if $G=AG(B)$, then $B\in acepBA$ is unique due to the theorem \ref{thm5}.

\begin{definition}
If $G$ is a simple graph, $\mathcal{P}(G)$ exists and $B$ is a sub $epBA$ of $\mathcal{P}(G)$, then $B$ is called a \textbf{measurable} $epBA$ on $G$. $G$ is called the atom graph of $B$.
\end{definition}

In fact, the measurable $epBA$ and $epBA$ are identical, and the only difference is that measurable $epBA$ is constructed by a given atom graph. If $p$ is a state on $B$,  then the triple $(G,B,p)$ forms a generalized probability theory for contextual systems with local consistency and exclusivity, where $B$ consists of the measurable events and $p$ is the probability measure on $B$.\par

\section{Exclusivity graph and atom graph}\label{sec_exgraph}

Given a set of events $\{E_i\}_{i=1}^n$, define a graph $G$ with $E_i$ as vertices and the exclusive relation as edges. $G$ is called an exclusivity graph\citep{Adan2010Contextuality,Adan2014Graph}. Exclusivity graph is an advanced method for quantum contextuality in finite cases, and have made considerable achievements\citep{Cabello2016Quantum,Xu2020Proof}. However, it does not reveal the graph structure of quantum mechanics due to two reasons. Firstly, the exclusivity graph is only applicable to rank-1 projectors, because the exclusivity may be not equivalent to compatibility for general projectors. Secondly, the exclusivity graph usually reflects a subsystem, which is significantly affected be the selection of vertices.\par

For example, the KCBS experiment \citep{Alexander2008Simple} considers five 3-dimensional rank-1 projectors $P_1=\{\hat{P}_0,\hat{P}_1,\hat{P}_2,\hat{P}_3,\hat{P}_4\}$ such that $\hat{P}_i$ and $\hat{P}_{i+1}$  (with the sum modulo 5) are orthogonal. Let $Q_{KCBS}\in QS$ denote the quantum system generated by $P_1$, $\hat{P}_{i\ i+1}$ denote $\lnot(\hat{P}_i\lor\hat{P}_{i+1})$, $i=0,1,2,3,4$, and $P_2=\{\hat{P}_{01},\hat{P}_{12},\hat{P}_{23},\hat{P}_{34},\hat{P}_{40}\}$.  One can verify that Fig.\ref{KCBS} shows the atom graph of $Q_{KCBS}$.

\begin{figure}[H]
    \centering
    \includegraphics[width=0.43\linewidth]{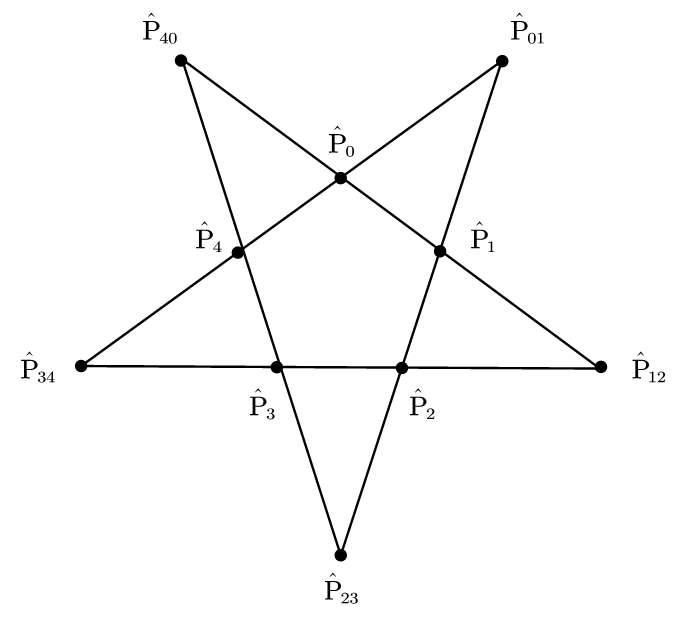}
    \caption{$AG(Q_{KCBS})$.}\label{KCBS}
 \end{figure}

For convenience, we introduce the notion "substate" for subgraph.

\begin{definition}
A \textbf{substate} on a graph $G$ is defined by a map $p:V(G)\rightarrow[0,\ 1]$ such that for each maximal clique $C$ of $G$, $\sum_{v\in C}p(v)\leq1$. Use $ss(G)$ to denote the set of substates on $G$.
\end{definition}

Consider two induced subgraphs of $AG(Q_{KCBS})$: $G_1=AG(Q_{KCBS})[P_1]$ and $G_2=AG(Q_{KCBS})[P_2]$, shown in Fig.\ref{G1} and \ref{G2}. Note that $P_1$ and $P_2$ both generate $Q_{KCBS}$, and $ss(G_1)\cong ss(G_2)\cong s(Q_{KCBS})$. For details, if $p\in s(Q_{KCBS})$, $p(\hat{P}_{i\  i+1})=1-p(\hat{P}_i)-p(\hat{P}_{i+1})$, $i=0,1,2,3,4$. Therefore, the substates $\{p|_{G_1}(\hat{P}_{i\ i+1})\}_{i=0}^4$ and $\{p|_{G_2}(\hat{P}_i)\}_{i=0}^4$ determine each other, then determine $p|_{AG(Q_{KCBS})}$. $Q_{KCBS}$ is a finite quantum system, so $p$ is determined. In other words, $G_1$ and $G_2$ both reconstruct the KCBS system.\par

\begin{figure}[H]
      \begin{minipage}[t]{0.5\linewidth}
          \centering
          \includegraphics[width=0.55\linewidth]{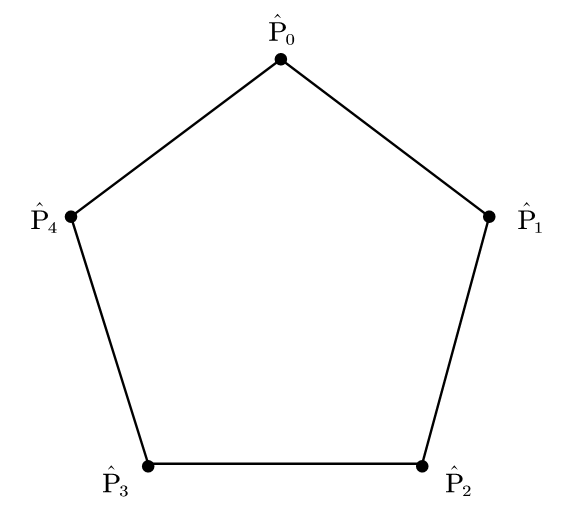}
          \caption{The exclusivity graph $G_1$}\label{G1}
      \end{minipage}
      \begin{minipage}[t]{0.5\linewidth}
          \includegraphics[width=0.55\linewidth]{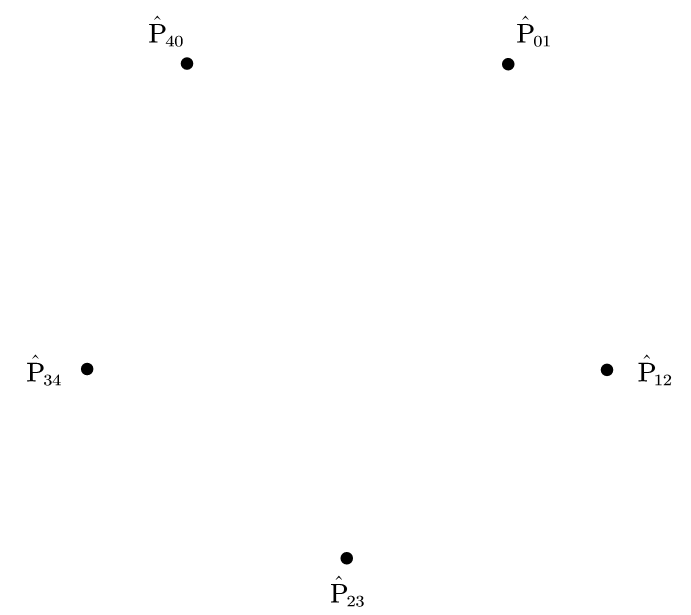}
          \caption{The exclusivity graph $G_2$}\label{G2}
      \end{minipage}
\end{figure}

However, some conclusions in exclusivity graph only hold for $G_1$ but not $G_2$, even though they depict the same system. Cabbelo\citep{Adan2010Contextuality,Adan2014Graph} points out that the weighted independence number $\alpha(G,w)$, weighted Lov\'{a}sz number $\vartheta(G,w)$ and weighted fractional packing number $\alpha^*(G,w)$ of the exclusivity graph $G$ can witness the quantum contextuality. For $G_1$, we have $$\alpha(G_1)=2<\vartheta(G_1)=\sqrt{5}<\alpha^*(G_1)=2.5,\ (w=1),$$ which means $G_1$ witnesses the violation of noncontextuality (NC) inequality $\sum_{v\in G_1}p(G_1)\leq\alpha(G_1)$. However, we have $$\alpha(G_2,w)=\vartheta(G_2,w)=\alpha^*(G_2,w)\ for\ any\ weight\ function\ w$$, which means any NC inequality $\sum_{v\in G_2}w_ip(G_2)\leq\alpha(G_2,w)$ on $G_2$ will not be violated, so $G_2$ does not present quantum contextuality. The reason of the problem is the exclusivity graph approach does not describe the complete system, in other words, exclusivity graph is the subgraph of atom graph, and the different selections of subgraphs may lose different properties of system.\par

With the conclusions of this paper, $epBA$ and atom graph provides a rigorous framework to uniquely and definitely describe the finite quantum systems. All the NC inequalities of a finite $Q\in QS$ have the form of $$\sum_{v\in AG(Q)}w(v)p(v)\leq\alpha(AG(Q),w),$$ where $w$ is a weight function $w: AG(Q)\to\mathbb{R}$, and $p\in s(AG(Q))$. If there exists a $w$ such that $\alpha(AG(Q),w)<\vartheta(AG(Q),w)$, then $Q$ presents contextuality. The selection of NC inequality, that is, the selection of $w$, is a significant field in quantum contextuality.

\section{Conclusion}

The quantum mechanics cannot be described by hidden-variable theory. However, a considerable portion of quantum systems possesses features analogous to hidden-variable theory. We prove that the finite dimensional quantum systems satisfy two graph structure theories (Proportions \ref{prop3} and \ref{prop4}), which imply that the atom graph is a rigorous mathematical characterization of finite dimensional quantum systems. We summarize our conclusions in Fig.\ref{collections}\par

\begin{figure}[h]
    \centering
    \includegraphics[width=0.9\linewidth]{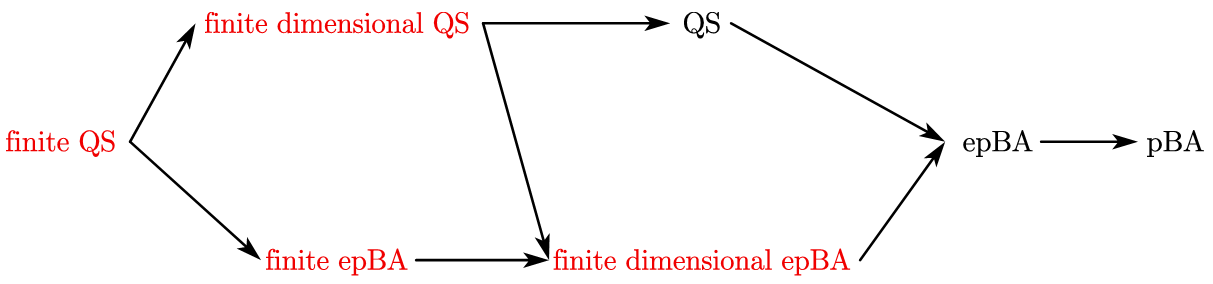}
    \caption{The inclusion relation of different collections. $A\to B$ means $A\subseteq B$. The red ones satisfy both two graph structure theorems. }\label{collections}
\end{figure}

For the infinite dimensional cases, we prove that the atomic and complete quantum systems have a one-one correspondence to their atom graphs(Theorem \ref{thm5}), and introduce a generalization from classical probability theory $(\Lambda,B,p)$ to the $(G,B,p)$ of graph $G$ and measurable $B\in epBA$ for contextual systems with local consistency and exclusivity.  The potential appliance of structure $(G,B,p)$ requires further research.\par

The atom graph formalizes the exclusivity graph approach, because each finite quantum system corresponds uniquely to an atom graph up to isomorphism. $epBA$ and atom graph have the potential to establish the mathematical framework of contextuality theory, to resolve some long-standing theoretical questions such as the proper definition of state-independent contextuality (SIC) and finding the minimal system with SIC\citep{Adan2023Kochen}.

\bmhead{Acknowledgments}
The work was supported by National Natural Science Foundation of China (Grant No. 12371016, 11871083) and National Key R\&D Program of China (Grant No. 2020YFE0204200).

\numberwithin{theorem}{section}
\appendix
\renewcommand{\thetheorem}{\thesection.\arabic{theorem}}

\section{Proof of the major theorems}\label{A}

We prove the theorems \ref{thm4} and \ref{thm5} here, which deduce the theorems \ref{thm1}, \ref{thm2} and \ref{thm3}.

\begin{lemma}\label{lem1_inf}
If $B\in acepBA$, and $C\subseteq B$ is a maximal Boolean subalgebra, then $A(C)\subseteq A(B)$.
\end{lemma}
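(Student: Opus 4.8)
The plan is to adapt the proof of Lemma \ref{lem1_finite} to the infinite setting. Fix a maximal Boolean subalgebra $C\subseteq B$ and an atom $c\in A(C)$; I will show $c\in A(B)$, arguing by contradiction, so suppose $c\notin A(B)$. Since $c\neq 0$, the definition of atom yields some $b\in B$ with $0\neq b\leq c$ and $b\neq c$. First note $b\notin C$: otherwise $b$ would be an element of $C$ lying strictly between $0$ and the $C$-atom $c$, which is impossible.

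The point where the finite proof does not transfer verbatim is that $A(C)$ need no longer generate $C$ (for instance $C$ could be atomless, or atomic but not complete), so from the fact that $\{b\}\cup A(C)$ sits inside a common Boolean subalgebra one cannot conclude that $C$ itself does. Instead I will show directly that the whole set $S:=\{b\}\cup C$ is pairwise compatible. Compatibility among elements of $C$ is automatic, so the only thing to verify is $b\odot y$ for every $y\in C$. Working inside $C$ (whose operations and order are the restrictions of those of $B$), the element $y\wedge c$ satisfies $y\wedge c\leq c$, and as $c$ is a $C$-atom either $y\wedge c=c$, i.e. $c\leq y$, or $y\wedge c=0$, i.e. $y\leq\neg c$. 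In the first case $b\leq c\leq y$, and transitivity of $B$ — which holds because $B$ is exclusive, by the Abramsky--Barbosa theorem — gives $b\leq y$, hence $b\odot y$. In the second case $b\leq c$ and $y\leq\neg c$ exhibit $b\bot y$, and then $b\odot y$ since $B$ satisfies LEP. Thus $S$ is pairwise compatible.

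By the defining axiom of partial Boolean algebras, $S$ is contained in some Boolean subalgebra $D\subseteq B$. Then $C\subseteq S\subseteq D$ while $b\in D\setminus C$, so $C\subsetneq D$, contradicting the maximality of $C$. Hence $c\in A(B)$, and since $c\in A(C)$ was arbitrary, $A(C)\subseteq A(B)$.

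I expect the one real subtlety — the main obstacle — to be precisely this passage: replacing the finite-case appeal to ``$A(C)$ generates $C$'' by a case analysis on the $C$-atom $c$ together with transitivity, so that $C$ can be enlarged by $b$ while preserving pairwise compatibility. Everything else (the definitions of atom, $\bot$, $\odot$, and the $pBA$ axiom) is used exactly as in the finite proof; in fact atomicity and completeness of $B$ play no role here, only $B\in epBA$ and the maximality of $C$.
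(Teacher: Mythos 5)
Your proof is correct, and it follows the same overall strategy as the paper's: produce $b$ with $0\neq b\leq c$ and $b\neq c$, show that adjoining $b$ to a suitable pairwise-compatible set yields a Boolean subalgebra strictly larger than $C$, and contradict maximality. The difference is in which set you enlarge. The paper simply declares that the proof of Lemma \ref{lem1_finite} carries over verbatim; that finite argument adjoins $b$ to $A(C)$ and derives its contradiction from the fact that a Boolean subalgebra containing $A(C)$ already contains all of $C$ --- clear when $C$ is finite (every element of $C$ is a finite join of atoms), but exactly the point needing justification when $C$ is infinite, since the Boolean subalgebra supplied by the $pBA$ axiom need not be closed under the infinite joins required to recover $C$ from $A(C)$. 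You sidestep this by proving $b\odot y$ for \emph{every} $y\in C$, splitting on whether $y\wedge c=c$ or $y\wedge c=0$, using transitivity (available via the Abramsky--Barbosa theorem since $B$ is exclusive) in the first case and LEP in the second, so that $\{b\}\cup C$ itself is pairwise compatible. This is a genuine refinement: it makes the infinite case airtight where the paper's ``identical'' is doing unacknowledged work, and, as you correctly observe, it shows the lemma needs only $B\in epBA$ and the maximality of $C$, with atomicity and completeness playing no role.
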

\begin{proof}
Identical to the proof of lemma \ref{lem1_finite}.
\end{proof}
\begin{lemma}\label{lem2}
If $B\in acepBA$, $C_1,C_2$ are two maximal Boolean subalgebras of $B$, $A_1\subseteq A(C_1)$, $A_2\subseteq A(C_2)$, $A'_1=A(C_1)-A_1$ and $A'_2=A(C_2)-A_2$, then $\bigvee A_1=\bigvee A_2$ iff $A'_1\cup A_2=A(D_1)$ and $A_1\cup A'_2=A(D_2)$, where $D_1,D_2$ are maximal Boolean subalgebras of $B$.
\end{lemma}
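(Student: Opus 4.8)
The plan is to reduce everything to complementation inside the Boolean subalgebras $C_1,C_2,D_1,D_2$, using exclusivity only to supply the compatibility needed to form these subalgebras. Throughout write $x:=\bigvee A_1$ and $y:=\bigvee A_2$. Since $A(C_1)=A_1\sqcup A'_1$ and $A(C_2)=A_2\sqcup A'_2$, inside the Boolean algebras $C_1,C_2$ we have $\bigvee A'_1=\lnot x$ and $\bigvee A'_2=\lnot y$. I will also use — proving it first, or invoking the structure theory of maximal contexts — that a maximal Boolean subalgebra $C$ of an $acepBA$ is atomic and complete, so that $C\cong\mathcal{P}(A(C))$; in particular $\bigvee A(C)=1$, a set of atoms of $C$ joining to $1$ is all of $A(C)$, and the join in $B$ of a subset of $A(C)$ lies in $C$. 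In the finite (or finite-dimensional) case all of this is automatic.

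For the ``only if'' direction, assume $x=y$. First I would check that $A'_1\cup A_2$ is pairwise compatible: all its members are atoms of $B$ by Lemma \ref{lem1_inf}; two members of $A'_1$ (resp.\ of $A_2$) are compatible as atoms of $C_1$ (resp.\ $C_2$); and for $a\in A'_1$, $b\in A_2$ we have $a\leq\lnot x$ and $b\leq x$, so $a\bot b$ is witnessed by $c=\lnot x$, whence $a\odot b$ since $B$ is exclusive. (As a by-product $A'_1\cap A_2=\emptyset$, a common element being $\leq x\land\lnot x=0$.) By Zorn's lemma this pairwise-compatible set lies in a maximal Boolean subalgebra $D_1$. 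Each member of $A'_1\cup A_2$ is an atom of $B$ inside $D_1$, hence an atom of $D_1$, so $A'_1\cup A_2\subseteq A(D_1)$; and $\bigvee(A'_1\cup A_2)=\bigvee A'_1\vee\bigvee A_2=\lnot x\vee x=1$, which by the structure of $D_1$ leaves no room for a further atom, giving $A(D_1)=A'_1\cup A_2$. The same argument applied to $A_1\cup A'_2$ (now the cross-compatibilities use $a\leq x$ for $a\in A_1$ and $b\leq\lnot y=\lnot x$ for $b\in A'_2$) yields a maximal Boolean subalgebra $D_2$ with $A(D_2)=A_1\cup A'_2$.

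For the ``if'' direction, suppose $A(D_1)=A'_1\cup A_2$ and $A(D_2)=A_1\cup A'_2$ for maximal Boolean subalgebras $D_1,D_2$. Then $1=\bigvee A(D_1)=\bigvee A'_1\vee\bigvee A_2=\lnot x\vee y$. Here $\lnot x=\bigvee A'_1\in D_1$, hence $x=\lnot(\lnot x)\in D_1$, and $y=\bigvee A_2\in D_1$, so the identity $\lnot x\vee y=1$ holds inside the Boolean algebra $D_1$ and forces $x\leq y$. Symmetrically, $A(D_2)=A_1\cup A'_2$ gives $x\vee\lnot y=1$ inside $D_2$, hence $y\leq x$. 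Therefore $x=y$, i.e.\ $\bigvee A_1=\bigvee A_2$.

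The main obstacle is the step ``$\bigvee(A'_1\cup A_2)=1$ leaves no room for another atom of $D_1$'' in the ``only if'' direction: in the general $acepBA$ setting this rests on maximal Boolean subalgebras being complete atomic, i.e.\ $C\cong\mathcal{P}(A(C))$, together with the compatibility of joins taken in $B$ with the context's own joins. Establishing these facts about maximal contexts from atomicity, completeness and exclusivity is the real work; with them in hand the rest is the short complementation computation above, and in the finite-dimensional case the obstacle disappears.
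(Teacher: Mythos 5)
Your proof is correct and, in the ``only if'' direction, essentially identical to the paper's: exclusivity gives the cross-compatibilities between $A'_1$ and $A_2$, Lemma \ref{lem1_inf} puts these atoms into $A(D_1)$, and $\bigvee(A'_1\cup A_2)=1$ forces equality. In the ``if'' direction you diverge genuinely from the paper. The paper first proves $A'_1\cap A_2=\emptyset$ by a maximality argument (a common atom $a$ would make $\{a\}\cup A(D_2)$ pairwise compatible and hence enlarge $D_2$), and then reads off $\bigvee A_1=\lnot\bigl(\bigvee A'_1\bigr)=\bigvee A_2$ by complementing $A'_1$ once inside $C_1$ and once inside $D_1$. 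You instead compute $\lnot x\vee y=1$ inside $D_1$ and $x\vee\lnot y=1$ inside $D_2$ to get $x\leq y$ and $y\leq x$, hence $x=y$ since $\wedge$ is a well-defined partial operation. Your route avoids the disjointness argument altogether, at the price of needing $x,y\in D_1$ and $x,y\in D_2$, i.e.\ that joins of subsets of $A(D_i)$ land in $D_i$ and agree with the joins taken in $C_1,C_2$; be aware that the paper's proof of Theorem \ref{thm4} later cites the disjointness fact ``from the proof of Lemma \ref{lem2},'' which your version supplies only as the by-product noted in your necessity direction (which is the case actually needed there). Finally, the structural facts you flag as ``the real work'' --- that a maximal Boolean subalgebra of an $acepBA$ is complete and atomic with $\bigvee A(C)=1$, so that a set of its atoms joining to $1$ must be all of $A(C)$ --- are used silently by the paper as well, so your proof is at the same level of rigor while being more explicit about what is being assumed.
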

\begin{proof}
Necessity: If $\bigvee A_1=\bigvee A_2=b\in B$, then $\bigvee A'_1=\bigvee A'_2=\neg b$. For any $a'_1\in A'_1$, $a'_1\leq\neg b$, and for any $a_2\in A_2$, $a_2\leq b$, so $a'_1\bot a_2$. Since $B$ is exclusive, $a'_1\odot a_2$. Therefore $A'_1\cup A_2$ is contained in a maximal Boolean subalgebra $D_1$, and $A'_1\cup A_2\subseteq A(D_1)$ due to the lemma \ref{lem1_inf}. Because $\bigvee (A'_1\cup A_2)=1$, $A'_1\cup A_2=A(D_1)$. Identically, $A_1\cup A'_2=A(D_2)$.\par

Sufficiency: If $A'_1\cup A_2=A(D_1)$ and $A_1\cup A'_2=A(D_2)$, we firstly prove that $A'_1\cap A_2=\emptyset$. Suppose there exists $a\in A'_1\cap A_2$, then $a\notin A_1$ and $a\notin A'_2$. On the other side, for all $a_1\in A_1$ and $a'_2\in A'_2$, $a\odot a_1$ and $a\odot a'_2$. Therefore $\{a\}\cup A_1 \cup A'_2$ is contained in a Boolean subalgebra, which contradicts that $D_2$ is maximal, so $A'_1\cap A_2=\emptyset$. Therefore $\bigvee A_1=\neg(\bigvee A'_1)=\bigvee A_2$.
\end{proof}

In the following proofs, similar to the lemma \ref{lem2}, if $A\subseteq A(C)$ for a maximal Boolean subalgebra $C$, then use $A'$ to denote $A(C)-A$.\par

\begin{theorem}[The theorem \ref{thm4}]
If $B$ is a finite dimensional $epBA$, then $s(B)\cong s(AG(B))$.
\end{theorem}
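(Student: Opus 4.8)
The plan is to establish the isomorphism $s(B)\cong s(AG(B))$ by exhibiting the restriction map $\Phi:s(B)\to s(AG(B))$, $p\mapsto p|_{A(B)}$, and proving it is a well-defined bijection. Well-definedness is the easy direction: if $p\in s(B)$ and $C$ is a maximal Boolean subalgebra, then $C\cong\mathcal{P}(A(C))$ is finite (since $B$ is finite dimensional), so the finite-additivity axiom of Definition \ref{def_state} applied inside $C$ gives $\sum_{a\in A(C)}p(a)=p(1)=1$; by Lemma \ref{lem1_inf} the maximal cliques of $AG(B)$ are exactly the sets $A(C)$ for maximal Boolean subalgebras $C$, so $p|_{A(B)}$ satisfies the defining condition of a state on $AG(B)$. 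Injectivity follows from atomicity: since $B$ is finite dimensional it is atomic, so every $x\in B$ lies in some maximal Boolean subalgebra $C$ and there $x=\bigvee\{a\in A(C):a\leq x\}$; applying finite additivity in $C$ again, $p(x)=\sum_{a\in A(C),\,a\leq x}p(a)$ is determined by $p|_{A(B)}$.

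The substantive work is surjectivity: given $q\in s(AG(B))$ I must construct $p\in s(B)$ with $p|_{A(B)}=q$. The natural definition is: for $x\in B$, pick a maximal Boolean subalgebra $C\ni x$ and set $p(x):=\sum_{a\in A(C),\,a\leq x}q(a)$. The first task is to show this is independent of the choice of $C$. Suppose $x\in C_1\cap C_2$ with $C_1,C_2$ maximal. Write $A_1=\{a\in A(C_1):a\leq x\}$ and $A_2=\{a\in A(C_2):a\leq x\}$; then $\bigvee A_1=x=\bigvee A_2$ (using atomicity and that $x$ is the join of the atoms of $C_i$ below it — here one checks $x=\bigvee A_i$ because $\neg x=\bigvee A_i'$ and $A_i\cup A_i'=A(C_i)$). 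By Lemma \ref{lem2}, $A_1'\cup A_2=A(D_1)$ and $A_1\cup A_2'=A(D_2)$ for maximal Boolean subalgebras $D_1,D_2$, and from the proof of that lemma $A_1'\cap A_2=\emptyset$, hence $A_1\cup A_1'\cup A_2 = A(D_1)\sqcup A_1$ and applying $\sum q=1$ on the maximal cliques $A(C_1)$ and $A(D_1)$ yields $\sum_{a\in A_1}q(a)=\sum_{a\in A_2}q(a)$. So $p$ is well-defined, and clearly $p|_{A(B)}=q$.

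It then remains to verify $p$ satisfies the three state axioms. For $p(0)=0$: no atom is $\leq 0$. For $p(\neg x)=1-p(x)$: choose $C\ni x$, so $\neg x\in C$ too, and the atoms of $C$ below $x$ together with those below $\neg x$ partition $A(C)$, whence $p(x)+p(\neg x)=\sum_{a\in A(C)}q(a)=1$. For the modularity axiom $p(x\lor y)+p(x\land y)=p(x)+p(y)$ when $x\odot y$: by definition of $pBA$, $x,y$ lie in a common Boolean subalgebra, hence in a common maximal Boolean subalgebra $C$; inside $C$, which is a genuine finite Boolean algebra $\mathcal{P}(A(C))$, all four of $x,y,x\lor y,x\land y$ are computed as unions/intersections of atom sets, and the inclusion–exclusion identity for the counting measure $a\mapsto q(a)$ gives the result directly. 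Finally, $s(B)$ and $s(AG(B))$ are convex sets and $\Phi$ is affine, so "$\cong$" here is simultaneously a bijection and an affine isomorphism; I would remark this rather than belabor it.

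The main obstacle is the well-definedness of $p$ in the surjectivity step — reconciling the two expressions $\sum_{a\in A_1}q(a)$ and $\sum_{a\in A_2}q(a)$ when $x$ sits in two different maximal contexts. This is precisely where exclusivity is indispensable: it is Lemma \ref{lem2} (whose necessity direction uses $\bot\subseteq\odot$) that supplies the auxiliary maximal Boolean subalgebras $D_1$ bridging $A_1'$ and $A_2$, without which there would be no relation between the atoms of $C_1$ and those of $C_2$ below $x$. Everything else is bookkeeping inside individual finite Boolean algebras.
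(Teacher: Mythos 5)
Your proposal is correct and follows essentially the same route as the paper's own proof: the restriction map $p\mapsto p|_{A(B)}$, injectivity by expanding each element as the join of the atoms of a maximal Boolean subalgebra beneath it, and surjectivity by defining $p(\bigvee A)=\sum_{a\in A}q(a)$ with well-definedness supplied by Lemma \ref{lem2} and the axioms checked inside a single maximal context. Your added observations (that maximal cliques of $AG(B)$ are exactly the sets $A(C)$, and that the bijection is affine) are accurate refinements of what the paper leaves implicit.
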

\begin{proof}
If $p\in s(B)$, obviously $p|_{A(B)}\in s(AG(B))$. Define $f:s(B)\to s(AG(B))$ as the restriction map below. We prove that $f$ is a bijection.

\begin{equation}
\begin{split}
f:s(B)&\to s(AG(B))\\
p&\mapsto p|_{A(B)}
\nonumber
\end{split}
\end{equation}

If $p_1,p_2\in s(B)$ and $p_1\neq p_2$, then $f(p_1)\neq f(p_2)$. Otherwise, suppose that $f(p_1)=f(p_2)=p_1|_{A(B)}=p_2|_{A(B)}$. For any $b\in B$, let $b\in C$ where C is a maximal Boolean subalgebra of $B$. Then $b=\bigvee A$, $A\subseteq A(C)\subseteq A(B)$ due to the lemma \ref{lem1_inf}. Thus $p_1(b)=p_1(\bigvee A)=\sum_{a\in A}p_1(a)=\sum_{a\in A}p_1|_{A(B)}(a)=\sum_{a\in A}p_2|_{A(B)}(a)=\sum_{a\in A}p_2(a)=p_2(\bigvee A)=p_2(b)$, so $p_1=p_2$, which induces a contradiction. Therefore $f$ is injective.\par

If $p'\in s(AG(B))$, then define $p:B\to[0,1]$ as below.
\begin{equation}
\begin{split}
p:B&\to [0,1]\\
0&\mapsto 0\\
b=\bigvee A&\mapsto\sum_{a\in A}p'(a),\ (A\subseteq A(B))
\nonumber
\end{split}
\end{equation}

Then $p|_{A(B)}=p'$, we prove that $p\in s(B)$. $p(b)=\sum_{a\in A}p'(a)\in[0,1]$ because $p'\in s(AG(B))$ and $A$ is contained in a maximal Boolean subalgebra. If $b=\bigvee A_1=\bigvee A_2$, then $A'_1\cup A_2=A(D_1)$ and $A'_1\cap A_2=\emptyset$ due to the proof of lemma \ref{lem2}, so $p(b)=\sum_{a\in A_1}p'(a)=1-\sum_{a\in A'_1}p'(a)=\sum_{a\in A_2}p'(a)$. Therefore, $p$ is well-defined.\par

We have $p(0)=0$. If $b=\bigvee A$, $p(\neg b)=p(\bigvee A')=\sum_{a\in A'}p'(a)=1-\sum_{a\in A}p'(a)=1-p(b)$. If $x,y\in B$ and $x\odot y$, then $x,\ y$ are in the same maximal Boolean subalgebra $C$. Let $x=\bigvee A_x,\ y=\bigvee A_y$ where $A_x, A_y\subseteq A(C)$. We have $p(x\lor y)+p(x\land y)=\sum_{a\in A_x\cup A_y}p'(a)+\sum_{a\in A_x\cap A_y}p'(a)=\sum_{a\in A_x}p'(a)+\sum_{a\in A_y}p'(a)=p(x)+p(y)$, so $p\in s(B)$. Therefore, $f(p)=p'$. $f$ is surjective.\par

In conclusion, $f$ is a bijection between $s(B)$ and $s(AG(B))$.
\end{proof}

\begin{theorem}[The theorem \ref{thm5}]
 If $B_1,B_2\in acepBA$. then $B_1\cong B_2$ iff $AG(B_1)\cong AG(B_2)$.
\end{theorem}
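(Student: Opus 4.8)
The forward direction is immediate: an isomorphism $\theta:B_1\to B_2$ of exclusive partial Boolean algebras preserves $0,1$, the operations $\land,\lor,\lnot$, and the compatibility relation $\odot$, hence restricts to a bijection $A(B_1)\to A(B_2)$ preserving non-equality and adjacency, i.e.\ a graph isomorphism $AG(B_1)\cong AG(B_2)$. All the content is in the converse.

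The plan is to reconstruct $B\in acepBA$ from $AG(B)$ the way a finite Boolean algebra is reconstructed from its atoms, but one maximal context at a time. The first step is a structural lemma: for any maximal Boolean subalgebra $C$ of $B$, the Boolean algebra $C$ is atomic and complete, so $C\cong\mathcal P(A(C))$; moreover, by Lemma~\ref{lem1_inf} one has $A(C)\subseteq A(B)$, the set $A(C)$ is a \emph{maximal} clique of $AG(B)$, and $C\mapsto A(C)$ is a bijection from the maximal Boolean subalgebras of $B$ onto the maximal cliques of $AG(B)$. Atomicity of $C$ is the delicate point: given a nonzero $c\in C$, atomicity of $B$ supplies an atom $a\in A(B)$ with $a\le c$, and then exclusivity (together with the powerset-type structure one is simultaneously establishing inside $C$) forces $a$ to be compatible with every element of $C$; maximality of $C$ then gives $a\in C$, hence $a\in A(C)$ and $a\le c$, so $C$ is atomic. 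Maximality of the clique $A(C)$ follows because adjoining any further atom compatible with all of $A(C)$ would generate a Boolean subalgebra strictly containing $C\cong\mathcal P(A(C))$. Combined with the fact that every $b\in B$ lies in some maximal Boolean subalgebra (Zorn), this yields the working form we need: every $b\in B$ can be written $b=\bigvee A$ with $A\subseteq A(B)$ a clique, the join taken inside a maximal context.

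Now fix a graph isomorphism $\varphi:AG(B_1)\to AG(B_2)$. Since $\varphi$ sends maximal cliques to maximal cliques, each $A(C)$ (for $C$ a maximal context of $B_1$) is sent to $A(C^{*})$ for a unique maximal context $C^{*}$ of $B_2$. Define $\Phi:B_1\to B_2$ by $\Phi(b)=\bigvee\varphi(A)$ whenever $b=\bigvee A$ with $A\subseteq A(C)$ a clique; the join on the right exists by completeness of $B_2$, as $\varphi(A)$ is a clique. Well-definedness is exactly where Lemma~\ref{lem2} is used: if $b=\bigvee A_1=\bigvee A_2$ with $A_i\subseteq A(C_i)$, then by the necessity part of Lemma~\ref{lem2} we get $A_1'\cup A_2=A(D_1)$ and $A_1\cup A_2'=A(D_2)$ for maximal contexts $D_1,D_2$; applying $\varphi$ (which, restricted to $A(C_i)$, commutes with complementation inside $A(C_i)$ because $\varphi(A(C_i))=A(C_i^{*})$) gives $\varphi(A_1)'\cup\varphi(A_2)=A(D_1^{*})$ and $\varphi(A_1)\cup\varphi(A_2)'=A(D_2^{*})$, so the sufficiency part of Lemma~\ref{lem2} yields $\bigvee\varphi(A_1)=\bigvee\varphi(A_2)$. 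One then checks $\Phi$ is a homomorphism of partial Boolean algebras: $\Phi(0)=0$, $\Phi(1)=\bigvee\varphi(A(C))=\bigvee A(C^{*})=1$; and if $x\odot y$ they lie in a common maximal context $C$, say $x=\bigvee A_x$, $y=\bigvee A_y$ with $A_x,A_y\subseteq A(C)$, whence $x\land y=\bigvee(A_x\cap A_y)$, $x\lor y=\bigvee(A_x\cup A_y)$, $\lnot x=\bigvee(A(C)\setminus A_x)$, so that $\Phi(x),\Phi(y)\in C^{*}$ (hence $\Phi(x)\odot\Phi(y)$) and $\Phi$ respects $\land,\lor,\lnot$. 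Running the identical construction with $\varphi^{-1}$ produces $\Psi:B_2\to B_1$, and $\Psi\circ\Phi=\mathrm{id}$, $\Phi\circ\Psi=\mathrm{id}$ follow from the uniqueness in the structural lemma: the atoms of $C^{*}$ below $\Phi(b)$ are precisely $\varphi(A_b)$, so $\Psi(\Phi(b))=\bigvee\varphi^{-1}(\varphi(A_b))=\bigvee A_b=b$. Thus $\Phi$ is an isomorphism $B_1\cong B_2$.

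I expect the main obstacle to be the structural lemma, and within it the claim that a maximal Boolean subalgebra of an $acepBA$ is atomic. This is the step that genuinely uses all three hypotheses (atomic, complete, exclusive) at once, and it is what makes $acepBA$ — rather than bare $pBA$ or $epBA$, where the phenomenon of Fig.~\ref{pB2}/Fig.~\ref{AG(pB2)} shows reconstruction fails — the correct infinite-dimensional generalization of "finite $epBA$". Everything after that lemma is bookkeeping with Lemmas~\ref{lem1_inf} and~\ref{lem2}; and since finite and finite-dimensional $epBA$ are instances of $acepBA$, Theorem~\ref{thm5} recovers Theorems~\ref{thm1} and~\ref{thm3} (and, paired with Theorem~\ref{thm4}, the full package of graph structure theorems).
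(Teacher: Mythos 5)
Your proposal is correct and follows essentially the same route as the paper's own proof: the map $b=\bigvee A\mapsto\bigvee\varphi(A)$, well-definedness via the two directions of Lemma~\ref{lem2} together with Lemma~\ref{lem1_inf}, and the homomorphism/bijectivity checks are exactly the paper's argument. Your preliminary structural lemma (each element is a join of atoms of a maximal context, and maximal contexts correspond to maximal cliques) simply makes explicit what the paper uses implicitly when it writes $b=\bigvee A$ with $A\subseteq A(B)$.
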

\begin{proof}
If $B_1\cong B_2$, $AG(B_1)\cong AG(B_2)$ obviously from the relevant definitions.\par

Conversely, if $g:A(B_1)\to A(B_2)$ is an isomorphism between $AG(B_1)$ and $AG(B_2)$, in other words, $a_1,a_2$ are adjacent iff $g(a_1),g(a_2)$ are adjacent, we define a map from $B_1$ to $B_2$ as follow.
\begin{equation}
\begin{split}
f:B_1&\to B_2\\
0 &\mapsto 0\\
b=\bigvee A_1&\mapsto\bigvee g(A_1).\ (A_1\subseteq A(B_1))
\nonumber
\end{split}
\end{equation}

The $C,C_1$ and $C_2$ below are all maximal Boolean subalgebras of $B_1$.

$f(b)=\bigvee g(A_1)$ exists because $B_2$ is complete. Suppose $A_1\subseteq A(C_1)$ and $A_1\subseteq A(C_2)$. If $b=\bigvee A_1=\bigvee A_2$, then $A'_1\cup A_2=A(D_1)$ and $A_1\cup A'_2=A(D_2)$ due to the proportion $1$ in lemma \ref{lem2}, so $A'_1\cup A_2$ and $A_1\cup A'_2$ are both maximal cliques of $AG(B_1)$. Suppose that $\bigvee g(A_1)\neq\bigvee g(A_2)$. Because of the proportion $2$ in lemma \ref{lem2}, one of $g(A'_1\cup A_2)$ and $g(A_1\cup A'_2)$ is not a maximal clique of $AG(B_2)$, which contradicts that $g$ is an isomorphism between $AG(B_1)$ and $AG(B_2)$. Therefore, $f$ is well-defined. \par

If $b_1=\bigvee A^1_{1}\in B_1$, $b_2=\bigvee A^2_{1}\in B_1$ and $b_1\neq b_2$, then $f(b_1)=\bigvee g(A^1_1)$, $f(b_2)=\bigvee g(A^2_1)$. Similarly, $f(b_1)\neq f(b_2)$ because of the lemma \ref{lem2} and the isomorphism $g$, so $f$ is injective. For any $b=\bigvee A\in B_2$, $f(\bigvee g^{-1}(A))=b$, so $f$ is surjective. Therefore, $f$ is a bijection.\par

Finally, $f(0)=0$. For $b=\bigvee A\in B_1$, suppose that $A\subseteq C$. $f(\neg b)=f(\neg(\bigvee A))=f(\bigvee A')=\bigvee g(A')=\neg\bigvee g(A)=\neg f(b)$. If $b_1, b_2\in B_1$ and $b_1\odot b_2$, then let $b_1=\bigvee A_1,b_2=\bigvee A_2\in C$, so $A_1,A_2\subseteq A(C)$, $f(b_1),f(b_2)\in f(C)$ and $f(b_1)\odot f(b_2)$. Furthermore, $f(b_1\lor b_2)=f(\bigvee A_1\lor\bigvee A_2)=f(\bigvee(A_1\cup A_2))=\bigvee g(A_1\cup A_2)=\bigvee g(A_1)\lor\bigvee g(A_2)=f(b_1)\lor f(b_2)$. Therefore, $f$ is a homomorphism.\par

In conclusion, $f$ is an isomorphism between $B_1$ and $B_2$.
\end{proof}

\bibliography{sn-bibliography}

\end{document}